\documentclass[onecolumn,a4paper,accepted=2024-04-29]{quantumarticle}
\pdfoutput=1

\usepackage{authblk}

\title{Resource Efficient Boolean Function Solver on Quantum Computer}
\author[1]{Xiang Li}
\author[2]{Hanxiang Shen}
\author[1,3]{Weiguo Gao}
\author[1,3]{Yingzhou Li}

\affil[1]{School of Mathematical Sciences, Fudan University}
\affil[2]{Shanghai Center for Mathematical Science, Fudan University}
\affil[3]{Shanghai Key Laboratory for Contemporary Applied Mathematics}

\usepackage{graphicx}
\graphicspath{{figure/}}
\usepackage{subcaption}

\usepackage{xcolor}
\usepackage[colorlinks,anchorcolor=blue]{hyperref}
\usepackage{amsmath, amssymb, amsthm}
\newtheorem{theorem}{Theorem}
\newtheorem{corollary}{Corollary}

\usepackage{algpseudocode}
\usepackage[Algorithmus]{algorithm}

\algnewcommand{\IIf}[1]{\State\algorithmicif\ #1\ \algorithmicthen}
\algnewcommand{\EndIIf}{\unskip\ \algorithmicend\ \algorithmicif}

\usepackage{bm}
\usepackage{cite}
\usepackage{extarrows}
\usepackage{booktabs}
\usepackage{braket}
\usepackage{qcircuit}
\usepackage{diagbox}
\usepackage{tikz}

\usepackage{multirow}
\usepackage{multicol}

\newcommand{\op}{\texttt}

\newcommand{\bigO}[1]{\mathcal{O}\left(#1\right)}

\newcommand{\U}[2]{U^{(#1)}_{#2}}
\newcommand{\N}[2]{N^{(#1)}_{#2}}
\newcommand{\T}[2]{T^{(#1)}_{#2}}
\newcommand{\F}[2]{F^{(#1)}_{#2}}
\newcommand{\G}[2]{G^{(#1)}_{#2}}
\newcommand{\K}[2]{K^{(#1)}_{#2}}

\newcommand{\bbC}{\mathbb{C}}
\newcommand{\bbE}{\mathbb{E}}
\newcommand{\bbF}{\mathbb{F}}

\newcommand{\calF}{\mathcal{F}}
\newcommand{\calG}{\mathcal{G}}
\newcommand{\calQ}{\mathcal{Q}}
\newcommand{\calR}{\mathcal{R}}
\newcommand{\calS}{\mathcal{S}}

\newcommand{\qubit}[1]{\ensuremath{\ket{#1}}}
\newcommand{\abs}[1]{\ensuremath{\lvert#1\rvert}}




\newcommand{\qcExampleNOTGates}{
    \Qcircuit @C=1.5em @R=.7em {
        \lstick{\qubit{x_1}} & \qw & \qw\barrier{2} & \qw & \gate{\op{X}} & \qw\barrier{2} & \qw & \ctrl{2} & \qw\barrier{2} & \qw & \ctrl{1} & \qw\barrier{2} & \qw & \gate{\op{Z}} & \qw\barrier{2} & \qw & \ctrl{1} & \qw\barrier{2} & \qw \\
        \lstick{\qubit{x_2}} & \qw & \qw & \qw & \qw & \qw & \qw & \qw & \qw & \qw & \ctrl{1} & \qw & \qw & \qw & \qw & \qw & \ctrl{1} & \qw & \qw \\
        \lstick{\qubit{x_3}} & \qw & \qw & \qw & \qw & \qw & \qw & \targ & \qw & \qw & \targ & \qw & \qw & \qw & \qw & \qw & \ctrl{0} & \qw & \qw \\
         &  &  &  & \dstick{\op{NOT}} &  &  & \dstick{\op{CNOT}} &  &  & \dstick{\op{CCNOT}} &  &  & \dstick{\op{Z}} &  &  & \dstick{\op{MCZ}} &  &  \\
         &  &  &  &  &  &  &  &  &  &  &  &  &  &  &  &  &  &  \\
         &  & \dstick{\qubit{000}} &  &  & \dstick{\qubit{100}} &  &  & \dstick{\qubit{101}} &  &  & \dstick{\qubit{101}} &  &  & \dstick{-\qubit{101}} &  &  & \dstick{-\qubit{101}} &  \\
         &  &  &  &  &  &  &  &  &  &  &  &  &  &  &  &  &  &
    }
}

\newcommand{\qcExampleFunctionControl}{
    \Qcircuit @C=1.5em @R=.7em {
        \lstick{\qubit{x_1}} & \ctrl{2} & \ctrl{1} & \qw      & \qw
        & \ctrl{2} & \ctrl{1} & \qw      & \qw
        & & & & & &\\
        \lstick{\qubit{x_2}} & \qw      & \ctrl{1} & \qw      & \qw
        & \qw      & \ctrl{1} & \qw      & \qw
        & \xlongrightarrow{abbr.} & &
        \lstick{\qubit{x}} & /_2\qw & \sgate{f}{1} & \qw
        & \sgate{f}{1} & \qw \\
        \lstick{\qubit{0}}   & \targ    & \targ    & \gate{\op{X}} & \gate{\op{Z}}
        & \targ    & \targ    & \gate{\op{X}} & \qw
        & & &
        \lstick{\qubit{0}} & \qw    & \targ        & \gate{\op{Z}}
        & \targ        & \qw \gategroup{1}{2}{3}{4}{2em}{--} \\
    }
}

\newcommand{\qcExampleStackAll}{
    \Qcircuit @C=.5em @R=.7em {
        \lstick{\qubit{x}} & {/} \qw & \qw & \qw & \sgate{f_3}{3} & \sgate{f_2}{2} & \sgate{f_1}{1} & \qw & \qw & \qw & \sgate{f_1}{1} & \sgate{f_2}{2} & \sgate{f_3}{3} & \qw & \qw & \qw \\
        \lstick{\qubit{0}} & \qw & \qw & \qw & \qw & \qw & \targ & \qw & \ctrl{1} & \qw & \targ & \qw & \qw & \qw & \qw & \qw \\
        \lstick{\qubit{0}} & \qw & \qw & \qw & \qw & \targ & \qw & \qw & \ctrl{1} & \qw & \qw & \targ & \qw & \qw & \qw & \qw \\
        \lstick{\qubit{0}} & \qw & \qw & \qw & \targ & \qw & \qw & \qw & \ctrl{0} & \qw & \qw & \qw & \targ & \qw & \qw & \qw \\
         &  &  &  &  &  &  &  &  &  &  &  &  &  &  &
    }
}

\newcommand{\qcExampleRecursiveLevelOne}{
    \Qcircuit @C=.5em @R=.7em {
        \lstick{\qubit{x}} &{/} \qw & \qw & \qw & \sgate{f_1}{3} & \sgate{f_2}{2} & \sgate{f_3}{1} & \qw & \qw & \qw & \sgate{f_3}{1} & \sgate{f_2}{2} & \sgate{f_1}{3} & \qw & \qw & \qw \\
        \lstick{a_1}& \qw & \qw & \qw & \qw & \qw & \targ & \qw & \ctrl{1} & \qw & \targ & \qw & \qw &  \qw & \qw & \qw \\
        \lstick{a_2}& \qw & \qw & \qw & \qw & \targ & \qw & \qw & \ctrl{1} & \qw & \qw & \targ & \qw & \qw & \qw & \qw \\
        \lstick{a_3}& \qw & \qw & \qw & \targ & \qw & \qw & \qw & \ctrl{1} & \qw & \qw  & \qw & \targ &  \qw & \qw & \qw \\
        \lstick{a_4}& \qw & \qw & \qw & \qw & \qw & \qw & \qw & \targ & \qw & \qw & \qw & \qw & \qw & \qw & \qw \gategroup{1}{5}{6}{13}{2em}{--}\\
        & & & & &  & & & & &  & & {U_4^{(1)}} & & &
    }
}

\newcommand{\qcExampleRecursiveLevelK}{
    \Qcircuit @C=.5em @R=.7em {
        \lstick{\qubit{x}} &{/} \qw & \qw & \qw & \multigate{3}{U_3^{(\ell-1)}} & \multigate{2}{U_2^{(\ell-1)}} & \multigate{1}{U_1^{(\ell-1)}} & \qw & \qw & \qw & \multigate{1}{U_1^{(\ell-1)}} & \multigate{2}{U_2^{(\ell-1)}} & \multigate{3}{U_3^{(\ell-1)}} & \qw & \qw & \qw \\
        \lstick{a_1}& \qw & \qw & \qw & \ghost{U_3^{(\ell-1)}} & \ghost{U_2^{(\ell-1)}} & \ghost{U_1^{(\ell-1)}} & \qw & \ctrl{1} & \qw & \ghost{U_1^{(\ell-1)}} & \ghost{U_2^{(\ell-1)}} & \ghost{U_3^{(\ell-1)}} &  \qw & \qw & \qw \\
        \lstick{a_2}& \qw & \qw & \qw & \ghost{U_3^{(\ell-1)}} & \ghost{U_2^{(\ell-1)}} & \qw & \qw & \ctrl{1} & \qw & \qw & \ghost{U_2^{(\ell-1)}} & \ghost{U_3^{(\ell-1)}} & \qw & \qw & \qw \\
        \lstick{a_3}& \qw & \qw & \qw & \ghost{U_3^{(\ell-1)}} & \qw & \qw & \qw & \ctrl{1} & \qw & \qw  & \qw & \ghost{U_3^{(\ell-1)}} &  \qw & \qw & \qw \\
        \lstick{a_4}& \qw & \qw & \qw & \qw & \qw & \qw & \qw & \targ & \qw & \qw & \qw & \qw & \qw & \qw & \qw \gategroup{1}{5}{6}{13}{2em}{--}\\
        & & & & &  & & & & &  & & {U_4^{(\ell)}} & & &
    }
}

\newcommand{\qcExampleRecursiveCircuit}{
    \Qcircuit @C=.5em @R=.7em {
        \lstick{\qubit{x}} & {/} \qw & \qw & \qw & \multigate{4}{U_4^{(l-1)}} & \multigate{3}{U_3^{(l-1)}} & \multigate{2}{U_2^{(l-1)}} & \multigate{1}{U_1^{(l-1)}} & \qw & \qw & \qw & \multigate{1}{U_1^{(l-1)}} & \multigate{2}{U_2^{(l-1)}} & \multigate{3}{U_3^{(l-1)}} & \multigate{4}{U_4^{(l-1)}} & \qw & \qw & \qw \\
        \lstick{a_1} & \qw & \qw & \qw & \ghost{U_4^{(l-1)}} & \ghost{U_3^{(l-1)}} & \ghost{U_2^{(l-1)}} & \ghost{U_1^{(l-1)}} & \qw & \ctrl{1} & \qw & \ghost{U_1^{(l-1)}} & \ghost{U_2^{(l-1)}} & \ghost{U_3^{(l-1)}} & \ghost{U_4^{(l-1)}} & \qw & \qw & \qw \\
        \lstick{a_2} & \qw & \qw & \qw & \ghost{U_4^{(l-1)}} & \ghost{U_3^{(l-1)}} & \ghost{U_2^{(l-1)}} & \qw & \qw & \ctrl{1} & \qw & \qw & \ghost{U_2^{(l-1)}} & \ghost{U_3^{(l-1)}} & \ghost{U_4^{(l-1)}} & \qw & \qw & \qw \\
        \lstick{a_3} & \qw & \qw & \qw & \ghost{U_4^{(l-1)}} & \ghost{U_3^{(l-1)}} & \qw & \qw & \qw & \ctrl{1} & \qw & \qw & \qw & \ghost{U_3^{(l-1)}} & \ghost{U_4^{(l-1)}} & \qw & \qw & \qw \\
        \lstick{a_4} & \qw & \qw & \qw & \ghost{U_4^{(l-1)}} & \qw & \qw & \qw & \qw & \ctrl{0} & \qw & \qw & \qw & \qw & \ghost{U_4^{(l-1)}} & \qw & \qw & \qw \\
         &  &  &  &  &  &  &  &  &  &  &  &  &  &  &  &  &
    }
}

\newcommand{\qcExampleRecursiveLevelTwoExample}{
    \Qcircuit @C=.5em @R=.7em {
        \lstick{\qubit{x}} & {/} \qw & \sgate{f_1}{2} & \sgate{f_2}{1} & \qw & \sgate{f_2}{1} & \sgate{f_1}{2}\barrier{3} & \sgate{f_3}{1} & \qw & \sgate{f_3}{1}\barrier{2} & \sgate{f_4}{1} & \qw\barrier{1} & \qw & \qw\barrier{1} & \sgate{f_4}{1}\barrier{2} & \sgate{f_3}{1} & \qw & \sgate{f_3}{1}\barrier{3} & \sgate{f_1}{2} & \sgate{f_2}{1} & \qw & \sgate{f_2}{1} & \sgate{f_1}{2} & \qw \\
        \lstick{a_1} & \qw & \qw & \targ & \ctrl{1} & \targ & \qw & \targ & \ctrl{1} & \targ & \targ & \qw & \ctrl{1} & \qw & \targ & \targ & \ctrl{1} & \targ & \qw & \targ & \ctrl{1} & \targ & \qw & \qw \\
        \lstick{a_2} & \qw & \targ & \qw & \ctrl{1} & \qw & \targ & \qw & \targ & \qw & \qw & \qw & \ctrl{1} & \qw & \qw & \qw & \targ & \qw & \targ & \qw & \ctrl{1} & \qw & \targ & \qw \\
        \lstick{a_3} & \qw & \qw & \qw & \targ & \qw & \qw & \qw & \qw & \qw & \qw & \qw & \ctrl{0} & \qw & \qw & \qw & \qw & \qw & \qw & \qw & \targ & \qw & \qw & \qw
    }
}

\newcommand{\qcRearrangeOrigin}{
    \Qcircuit @C=.5em @R=.7em {
        \lstick{\qubit{x}} & {/} \qw & \qw &  &  &  &  & \qw & \qw & \qw & \sgate{f_3}{1} & \sgate{f_2}{2} & \sgate{f_1}{3} & \qw & \sgate{f_4}{2} & \sgate{f_5}{1} & \qw & \qw & \qw &  &  &  \\
        \lstick{a_1} & \qw & \qw &  &  &  &  & \qw & \ctrl{1} & \qw & \targ & \qw & \qw & \qw & \qw & \targ & \qw & \ctrl{1} & \qw &  &  &  \\
        \lstick{a_2} & \qw & \qw &  & \cdots &  &  & \qw & \ctrl{1} & \qw & \qw & \targ & \qw & \qw & \targ & \qw & \qw & \ctrl{1} & \qw &  & \cdots &  \\
        \lstick{a_3} & \qw & \qw &  &  &  &  & \qw & \ctrl{1} & \qw & \qw & \qw & \targ & \qw & \qw & \qw & \qw & \targ & \qw &  &  &  \\
        \lstick{a_4} & \qw & \qw &  &  &  &  & \qw & \targ & \qw & \qw="le" & \qw & \qw & \qw & \qw & \qw & \qw & \qw & \qw\gategroup{1}{11}{5}{16}{1em}{--} &  &  &
    }
}

\newcommand{\qcRearrangeExpanded}{
    \Qcircuit @C=.3em @R=1.1em {
        \lstick{\lvert x_1\rangle} & \qw & \qw &  &        &  &  & \qw & \qw & \qw & \ctrl{1}="qe" & \qw & \qw & \qw & \ctrl{2} & \qw & \ctrl{1} & \ctrl{5} & \qw & \ctrl{3} & \qw & \ctrl{5} & \ctrl{3} & \qw & \qw & \qw & \qw &  &  &  &  \\
        \lstick{\lvert x_2\rangle} & \qw & \qw &  &        &  &  & \qw & \qw & \qw & \ctrl{3} & \ctrl{1} & \ctrl{3} & \qw & \qw & \qw & \ctrl{4} & \qw & \ctrl{2} & \qw & \ctrl{1} & \qw & \qw & \ctrl{2} & \qw & \qw & \qw &  &  &  &  \\
        \lstick{\lvert x_3\rangle} & \qw & \qw &  & \cdots &  &  & \qw & \qw & \qw & \qw & \ctrl{2} & \qw & \qw & \ctrl{3} & \ctrl{1} & \qw & \qw & \qw & \qw & \ctrl{4} & \qw & \qw & \ctrl{2} & \ctrl{1} & \qw & \qw &  &  & \cdots &  \\
        \lstick{\lvert x_4\rangle} & \qw & \qw &  &        &  &  & \qw & \qw & \qw & \qw & \qw & \qw & \ctrl{1} & \qw & \ctrl{2} & \qw & \qw & \ctrl{3} & \ctrl{3} & \qw & \qw & \ctrl{2} & \qw & \ctrl{1} & \qw & \qw & \qw &  &  &  \\
        \lstick{a_1}               & \qw & \qw &  &        &  &  & \qw & \ctrl{1} & \qw & \targ & \targ & \targ & \targ & \qw & \qw & \qw & \qw & \qw & \qw & \qw & \qw & \qw & \targ & \targ & \qw & \ctrl{1} & \qw &  &  &  \\
        \lstick{a_2}               & \qw & \qw &  & \cdots &  &  & \qw & \ctrl{1} & \qw & \qw & \qw & \qw & \qw & \targ & \targ & \targ & \targ & \qw & \qw & \qw & \targ & \targ & \qw & \qw & \qw & \ctrl{1} & \qw &  & \cdots &  \\
        \lstick{a_3}               & \qw & \qw &  &        &  &  & \qw & \ctrl{1} & \qw & \qw & \qw & \qw & \qw & \qw & \qw & \qw & \qw & \targ & \targ & \targ & \qw & \qw & \qw & \qw & \qw & \targ & \qw &  &  &  \\
        \lstick{a_4}               & \qw & \qw &  &        &  &  & \qw & \targ & \qw & \qw & \qw & \qw & \qw & \qw & \qw & \qw & \qw & \qw & \qw & \qw & \qw & \qw & \qw & \qw & \qw & \qw & \qw\gategroup{1}{11}{8}{25}{1em}{--} &  &  &
    }
}

\newcommand{\qcRearrangeAfter}{
    \Qcircuit @C=.3em @R=1.1em {
        \lstick{\lvert x_1\rangle} & \qw & \qw &  &        &  &  & \qw & \qw      & \qw & \ctrl{1} & \qw       & \ctrl{5} & \qw      & \ctrl{2} & \qw      & \ctrl{1} & \qw      & \ctrl{5} & \ctrl{3} & \qw      & \ctrl{3}  & \qw      & \qw & \qw      &  &  &  &  \\
        \lstick{\lvert x_2\rangle} & \qw & \qw &  &        &  &  & \qw & \qw      & \qw & \ctrl{3} & \qw       & \qw      & \ctrl{3} & \qw      & \qw      & \ctrl{4} & \ctrl{2} & \qw      & \qw      & \ctrl{1} & \qw       & \qw      & \qw & \qw      &  &  &  &  \\
        \lstick{\lvert x_3\rangle} & \qw & \qw &  & \cdots &  &  & \qw & \qw      & \qw & \qw      & \ctrl{1}  & \qw      & \qw      & \ctrl{3} & \qw      & \qw      & \qw      & \qw      & \qw      & \ctrl{4} & \qw       & \ctrl{1} & \qw & \qw      &  &  & \cdots &  \\
        \lstick{\lvert x_4\rangle} & \qw & \qw &  &        &  &  & \qw & \qw      & \qw & \qw      & \ctrl{2}  & \qw      & \qw      & \qw      & \ctrl{1} & \qw      & \ctrl{3} & \qw      & \ctrl{3} & \qw      & \ctrl{2}  & \ctrl{1} & \qw & \qw      & \qw &  &  &  \\
        \lstick{a_1}               & \qw & \qw &  &        &  &  & \qw & \ctrl{1} & \qw & \targ    & \qw       & \qw      & \targ    & \qw      & \targ    & \qw      & \qw      & \qw      & \qw      & \qw      & \qw       & \targ    & \qw & \ctrl{1} & \qw &  &  &  \\
        \lstick{a_2}               & \qw & \qw &  & \cdots &  &  & \qw & \ctrl{1} & \qw & \qw      & \targ     & \targ    & \qw      & \targ    & \qw      & \targ    & \qw      & \targ    & \qw      & \qw      & \targ     & \qw      & \qw & \ctrl{1} & \qw &  & \cdots &  \\
        \lstick{a_3}               & \qw & \qw &  &        &  &  & \qw & \ctrl{1} & \qw & \qw      & \qw       & \qw      & \qw      & \qw      & \qw      & \qw      & \targ    & \qw      & \targ    & \targ    & \qw       & \qw      & \qw & \targ    & \qw &  &  &  \\
        \lstick{a_4}               & \qw & \qw &  &        &  &  & \qw & \targ    & \qw & \qw      & \qw       & \qw      & \qw      & \qw      & \qw      & \qw      & \qw      & \qw      & \qw      & \qw      & \qw       & \qw      & \qw & \qw      & \qw\gategroup{1}{11}{8}{23}{1em}{--} &  &  &
    }
}

\newcommand{\qcGroverFramework}{
    \hspace{8em}
    \Qcircuit @C=1em @R=.7em {
        \lstick{n \text{ variable qubits } \qubit{0}} & {/} \qw &
        \gate{\op{H}^{\otimes n}} &
        \multigate{2}{G_1} & \qw &
        \multigate{2}{G_2} & \qw & \qw & \qw & \multigate{2}{G_k} & \qw &
        \meter \\
        & & & & &   & & \cdots & & & &\\
        \lstick{m \text{ ancilla qubits }\qubit{0}} & {/} \qw &
        \qw &
        \ghost{G_1} & \qw & \ghost{G_2} & \qw & \qw & \qw & \ghost{G_k} & \qw & \qw
    }
}

\newcommand{\qcGroverFrameworkRandomized}{
    \hspace{8em}
    \Qcircuit @C=1em @R=.7em {
        \lstick{n \text{ variable qubits } \qubit{0}} & {/} \qw &
        \gate{\op{H}^{\otimes n}} &
        \multigate{2}{G_1} & \qw &
        \multigate{2}{G_2} & \qw & \qw & \qw & \multigate{2}{G_K} & \qw &
        \meter \\
        & & & & &   & & \cdots & & & &\\
        \lstick{m \text{ ancilla qubits } \qubit{0}} & {/} \qw &
        \qw &
        \ghost{G_1} & \qw & \ghost{G_2} & \qw & \qw & \qw & \ghost{G_K} & \qw & \qw
    }
}

\date{}
\begin{document}

\maketitle

\begin{abstract}
Nonlinear boolean equation systems play an important role in a wide range
of applications. Grover's algorithm is one of the best-known quantum
search algorithms in solving the nonlinear boolean equation system on
quantum computers. In this paper, we propose three novel techniques to
improve the efficiency under Grover's algorithm framework. A W-cycle
circuit construction introduces a recursive idea to increase the solvable
number of boolean equations given a fixed number of qubits. Then, a greedy
compression technique is proposed to reduce the oracle circuit depth.
Finally, a randomized Grover's algorithm randomly chooses a subset of
equations to form a random oracle every iteration, which further reduces
the circuit depth and the number of ancilla qubits. Numerical results on
boolean quadratic equations demonstrate the efficiency of the proposed
techniques.
\end{abstract}


\section{Introduction}

A nonlinear boolean equation system with $R$ equations and $n$ boolean
variables admits,
\begin{equation} \label{eq:original-problem}
    f_{j}(x) = \bigoplus_{i_1, i_2, \dotsc, i_n = 0}^1 c_{i_1, i_2,
    \dotsc, i_n} x_1^{i_1} x_2^{i_2} \cdots x_n^{i_n} = 0, \quad
    j = 1, \dotsc, R,
\end{equation}
where $c_{i_1,i_2,\dotsc,i_n} \in \bbF_2$ are boolean coefficients, $(x_1,
\dotsc, x_n) \in \bbF_2^n$ is the vector of boolean variables, the
$\oplus$ sign is the \op{XOR} (logical exclusive disjunction) function
which is the addition operation on $\bbF_2$, and the multiplication is the
\op{AND} (logical conjunction) operation. Nonlinear boolean equation
systems appear in a wide range of applications, including but not limited
to logic synthesis~\cite{micheli1994synthesis}, switching
networks~\cite{mccluskey1956minimization}, cryptography, etc. Among these
applications, it has become an important aspect of cryptography. Most
cryptography algorithms, for example, RSA~\cite{RSA1978} and
ECC~\cite{ECC1987}, are considered secure based on their negligible
success probability for attacks with bounded computational resources,
whose complexity comes from solving nonlinear boolean equations. There has
already been rich research on attacks with low nonlinearity cases, which
is vulnerable to linear approximation attacks~\cite{LinearCryptoDES}.
Further, via low order approximation~\cite{LowOrderApproxCipher,
FastLowOrderApproxCipher}, attacks are advanced to a higher level, where
the difficulty comes from solving highly nonlinear boolean equations.

As many claimed the achievement of quantum
supremacy~\cite{GoogleQuantumSupermacy, USTCQuantumSupermacy}, the quantum
computer becomes an attractive platform to address these exponentially
scaling problems, including solving boolean equations and cryptography
attacks~\cite{Shor, HHLAES2018}. Grover's algorithm~\cite{Grover} provides a
$\bigO{\sqrt N}$ algorithm for an unsorted database searching problem.
Grover's algorithm also makes it possible to address any numerical
optimization problem as long as it can be formulated as a search problem.
It reaches the asymptotic optimality and is the best-known quantum
algorithm for problems when classical algorithms cannot perform better
than brute force searching~\cite{QuantumSolveNP}.
F\"urer~\cite{GroverOptimization} briefed the procedure using Grover's
algorithm as a framework for adaptive global optimization algorithms.
Grassl {et al.}~\cite{GroverAES} extended Grover's algorithm to cracking
one of the most famous block ciphers, AES, even though it had been
suggested to be quantum-safe. It points out that, in principle, a quantum
implementation in quantum mechanics is possible once one embeds operations
in the cipher into permutations, which is reversible and can be viewed as
a subset of all unitary operations.

Finding solutions to nonlinear boolean equations~\eqref{eq:original-problem}
can also be perceived as a searching problem through the truth table of the
boolean function.  A translation from the equations to the quantum circuit
oracle is needed to adapt Grover's algorithm for nonlinear boolean equations.
Generally, the oracle can be perceived as a circuit representing reversible
functions, and much research in recent years has focused on the synthesis
of it.  For example, Shende {et al.}~\cite{SPMH03} link the synthesis of
reversible circuits with the truth table's permutation representation.
Given the truth table of the functions, Shende {et al.}~\cite{SPMH03}
proves that it can be realized with at most one ancilla qubit using only
\op{NOT}, \op{CNOT} and \op{CCNOT} gates (See Section \ref{sec:Preliminary}
for gate definitions) by using the truth table's permutation representation.
The optimal reversible circuits for small samples are examined using
searching algorithms as well.  Then, Brodsky~\cite{Bro04} further derives
the upper bound of the circuit size for certain types of Boolean functions
and explicitly constructs the circuit utilizing the cycle representation of
the permutation.  Besides, Travaglione {et al.}~\cite{BMHA02} considers the
special case where some qubits are read-only and explores the construction of
reversible circuits in this circumstance.  Focusing on minimizing the circuit
depth, Iwama {et al.}~\cite{KYS02} introduces several local transformation
rules for optimizing \op{CNOT}-based circuits.  The rules provide heuristic
methods to reduce the circuit depth when a full search is infeasible.

In this paper, we propose three novel techniques to improve the efficiency
and reduce the cost of solving the boolean quadratic equations under
Grover's algorithm framework. Although all techniques in this paper are
numerically tested in solving boolean quadratic equations (BQE), they can
be adapted to nonlinear boolean equations of higher algebraic orders.

\begin{itemize}

    \item (W-cycle Oracle) We propose a W-cycle structure in the
    construction of an oracle for the nonlinear boolean equations. Such a
    construction could use fewer qubits at the cost of a deeper circuit.
    This method provides a flexible trade-off between the number of
    required qubits and the circuit depth. The maximum number of boolean
    equations for a fixed number of qubits is carefully calculated. The
    asymptotic circuit depth is also estimated in this paper.

    \item (Oracle Compression) A rearranging technique is introduced to
    significantly reduce the circuit depth. It uses a greedy strategy to
    change the order of some interchangeable \op{NOT} and
    controlled--\op{NOT} gates. Many \op{NOT} and controlled--\op{NOT}
    gates are then canceled with each other, and the overall circuit depth
    is reduced without loss of accuracy.

    \item (Randomized Grover's Algorithm) We propose a randomized Grover's
    algorithm to establish a trade-off between the computational cost and
    success rate. The Grover operator varies randomly in each iteration
    by using only part of the boolean equations. The smaller number of
    boolean equations leads to a much more shallow quantum circuit.

\end{itemize}

Finally, we implement all the above three techniques in IBM
Qiskit~\cite{Qiskit} and apply them to address nonlinear boolean
equations. Numerical results show the efficiency of all three techniques.
Given a quantum computer with 25 qubits, we are able to solve 21 boolean
equations with 20 variables.

\begin{figure}[htb]
    \centering
    \includegraphics[trim={0 1.5cm 0 3.3cm},clip,width=0.8\textwidth]{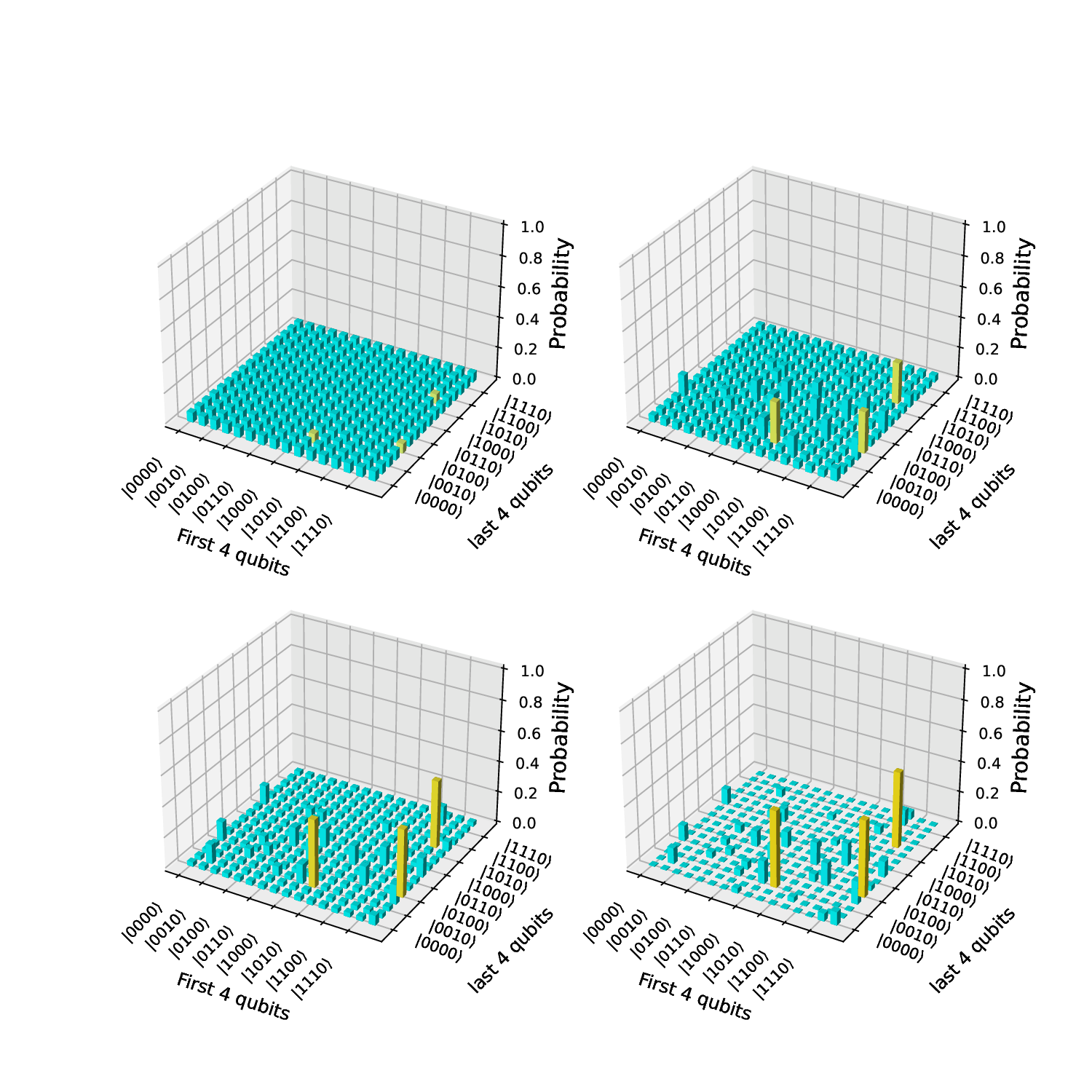}
    \includegraphics[width=0.75\textwidth]{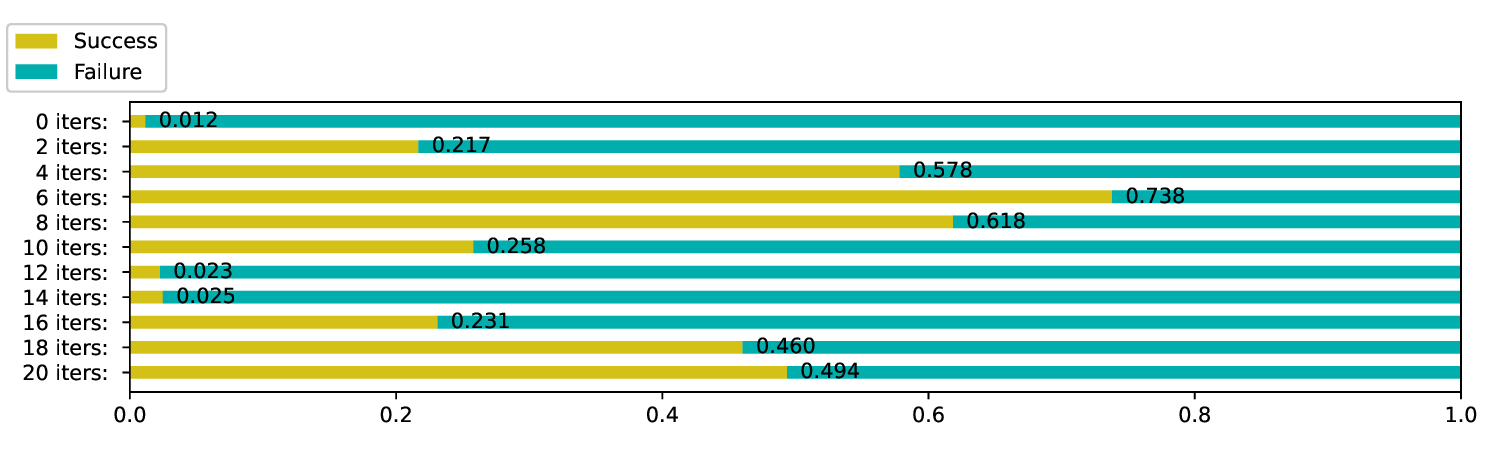}
    \caption{An $8$-qubit example as the illustration of the randomized
    Grover's algorithm. From left to right, top to bottom, the snapshots
    are the probability distribution among all states after $0$, $2$, $4$,
    and $6$ iterations. We observe that the probability distribution is
    concentrated more and more on the correct states during these $6$
    iterations. The $x$-axis
    and $y$-axis represent the first and the last $4$ qubits respectively.
    The $z$-axis is the probability of the corresponding state. Note that
    for visual clarity, the $x$-axis labels and $y$-axis labels are not
    fully listed.
    The bottom bar chart shows the change in the sum of
    probability for correct states at each iteration. The sum reaches
    the maximum at $6$ iterations.
    }

    \label{fig:sample-run-illustration}
\end{figure}

Figure \ref{fig:sample-run-illustration} illustrates how our randomized
Grover's algorithm works via an 8-variable quadratic boolean equations
example. It includes $4$ snapshots of the probability distribution among
all states during the process of iteration. Initiating from a state with
equal probabilities across all potential outcomes, the algorithm is
observed to incrementally amplify the probabilities associated with the
states that correspond to the solutions. Due to the randomized iterations,
probabilities at other states are not reduced consistently but vary at
different states. Grover's iteration differs from a normal iteration scheme
in that there exists some `optimal' number of iterations. The bar chart
provides the success probability at each iteration, where success probability
means the sum of the probabilities over three correct states.

The rest of the paper is organized as follows.
Section~\ref{sec:Preliminary} reviews the vanilla Grover's algorithm.
Section~\ref{sec:Construction} provides details of the technique for
constructing the W-cycle oracle and circuit reorganization.
Section~\ref{sec:Algorithm} explains the idea of randomized Grover's
algorithm and discusses the impacts of various iterative schemes.
Numerical results are demonstrated in Section~\ref{sec:Numerical}.
Finally, Section~\ref{sec:Conclusion} concludes the paper with a
discussion on future works.

\section{Preliminary}
\label{sec:Preliminary}

Grover's algorithm has been proposed and developed for decades. This
section serves as a review of applying Grover's algorithm to solve boolean
equations. We will first introduce notations and quantum circuit diagrams
used throughout the paper in Section~\ref{sec:notation}. Then a brief
review of Grover's algorithms is included in Section~\ref{sec:Grover}.

\subsection{Notations and Diagram}
\label{sec:notation}

Given two 1-qubit orthonormal basis states $\qubit{0}$ and $\qubit{1}$,
any 1-qubit state can be represented as a linear combination, i.e.,
$\qubit{x} = \alpha \qubit{0} + \beta \qubit{1}$, where $\alpha$ and
$\beta$ are complex coefficients satisfying the unit-length constraint,
$\abs{\alpha}^2 + \abs{\beta}^2 = 1$. Similarly, an $n$-qubit state
$\qubit{x}$ can be represented as a linear combination of $n$-qubit basis
states. The $n$-qubit orthonormal basis states are tensor products of $n$
1-qubit basis states and are denoted as $\qubit{0} =
\qubit{0\cdots00}, \qubit{1}=\qubit{0\cdots01}, \dots, \qubit{N-1} =
\qubit{1\cdots11}$ for $N = 2^n$. Alternatively, an $n$-qubit state
$\qubit{x} = \sum_{i=0}^{N-1} \alpha_i \qubit{i}$ is often represented by
its coefficient vector $a = \begin{pmatrix} \alpha_0 & \cdots &
\alpha_{N-1} \end{pmatrix}^\top \in \bbC^N$, where $a$ is of unit length
in 2-norm. The tensor product of two states is denoted as
$\qubit{x}\otimes\qubit{x^\prime} = \qubit{x}\qubit{x^\prime}$ and the
coefficient vector of $\qubit{x} \qubit{x^\prime}$ is the Kronecker
product of their coefficient vectors.

The basic operation in the quantum computer is called the quantum gate,
which manipulates qubits by applying some unitary transformations. Several
basic gates are widely used throughout the paper as our building blocks,
namely, \op{NOT}, \op{CNOT}, \op{MCX}, and \op{MCZ}. The \op{NOT} gate
flips the qubits, i.e., it changes $\qubit{0}$ to $\qubit{1}$ and changes
$\qubit{1}$ to $\qubit{0}$. More explicitly, it exchanges the coefficient
of $\qubit{0}$ and $\qubit{1}$,
\begin{equation*}
    \op{NOT} (\alpha\qubit{0} + \beta\qubit{1})
    = \beta\qubit{0} + \alpha\qubit{1}
    =
    \begin{pmatrix}
        0 & 1 \\
        1 & 0
    \end{pmatrix}
    \begin{pmatrix}
        \alpha\\
        \beta
    \end{pmatrix}.
\end{equation*}
A \op{CNOT} gate is a controlled \op{NOT} gate that acts on two qubits, a
control qubit and a target qubit. The \op{NOT} gate is applied to the
target qubit only if the control qubit is in $\qubit{1}$. For example,
given a 2-qubit state
\begin{equation*}
    \qubit{x} \qubit{x^\prime} = \alpha_{00} \qubit{00}
    + \alpha_{01} \qubit{01} + \alpha_{10} \qubit{10}
    + \alpha_{11} \qubit{11},
\end{equation*}
if we apply a controlled \op{NOT} gate on the first qubit, the state will be
changed to
\begin{equation*}
    \op{CNOT} (\qubit{x}\qubit{x^\prime}) = \alpha_{00} \qubit{00}
    + \alpha_{01} \qubit{01} + \alpha_{11} \qubit{10}
    + \alpha_{10} \qubit{11} =
    \begin{pmatrix}
        1 & 0 & 0 & 0 \\
        0 & 1 & 0 & 0 \\
        0 & 0 & 0 & 1 \\
        0 & 0 & 1 & 0 \\
    \end{pmatrix}
    \begin{pmatrix}
        \alpha_{00}\\
        \alpha_{01}\\
        \alpha_{10}\\
        \alpha_{11}
    \end{pmatrix}.
\end{equation*}
After the \op{CNOT} operation, the 2-qubit system is said to be entangled
since we cannot represent $\op{CNOT}(\qubit{x}\qubit{x^\prime})$ as a
tensor product of two states. The \op{MCX} gate is a multi-controlled
\op{NOT} gate that acts on multiple qubits, specifically multiple control qubits and a
target qubit. It is defined as applying a \op{NOT} gate to the target qubit only if all control qubits are in $\qubit{1}$. When there are two control qubits,
\op{MCX} is also denoted as \op{CCNOT}. The \op{MCZ} gate is a
multi-controlled \op{Z} gate similar to \op{MCX}, except that the 1-qubit
gate been controlled is a \op{Z} gate,
\begin{equation*}
    \op{Z}(\alpha\qubit{0}+\beta\qubit{1})
    = \alpha\qubit{0} - \beta\qubit{1} =
    \begin{pmatrix}
        1 & \\ & -1
    \end{pmatrix}
    \begin{pmatrix}
        \alpha \\ \beta
    \end{pmatrix}.
\end{equation*}
The matrix representation of the \op{MCZ} with one controlling qubit
admits
\begin{equation*}
    \op{MCZ} (\qubit{x}\qubit{x^\prime}) = \alpha_{00} \qubit{00}
    + \alpha_{01} \qubit{01} + \alpha_{10} \qubit{10}
    - \alpha_{11} \qubit{11} =
    \begin{pmatrix}
        1 & 0 & 0 & 0 \\
        0 & 1 & 0 & 0 \\
        0 & 0 & 1 & 0 \\
        0 & 0 & 0 & -1 \\
    \end{pmatrix}
    \begin{pmatrix}
        \alpha_{00}\\
        \alpha_{01}\\
        \alpha_{10}\\
        \alpha_{11}
    \end{pmatrix}.
\end{equation*}
For \op{MCZ}, the sign of the coefficient is flipped if all qubits,
including control qubits and the target qubit, are in $\qubit{1}$. In
other words, the roles of control qubits and the target qubit could be
swapped without affecting the outcomes. Hence, in latter quantum circuit
diagrams, the \op{MCZ} is depicted without distinguishing the control
qubits and the target qubit.

\begin{figure}[htb]
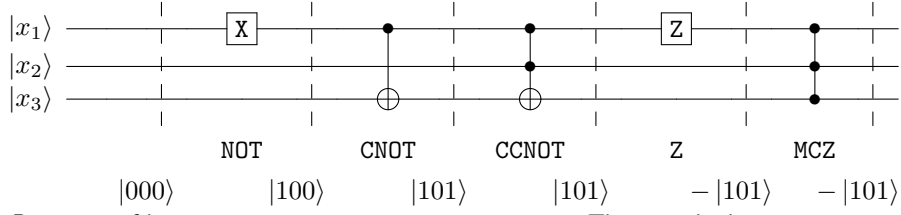

    \centerline{\qcExampleNOTGates}
    \caption{Diagrams of basic quantum gates in quantum circuits. The \op{X}
    in the box represents a \op{NOT} gate. For \op{CNOT} and \op{CCNOT},
    the solid dots are on the control qubits, and $\oplus$ is on the
    target qubit. The \op{Z} in the box represents a \op{Z} gate, whereas the
    \op{MCZ} gate is represented by solid dots on all the qubits it acts
    on. A toy example with an initial state
    $\qubit{000}$ is given below the circuit. }
    \label{fig:gate-example}
\end{figure}

Figure~\ref{fig:gate-example} illustrates \op{NOT}, \op{CNOT}, \op{CCNOT},
\op{Z}, and \op{MCZ} quantum gates in a quantum circuit with 3 qubits.
These circuit diagrams will be repeatedly used in this paper.
In this figure, we give a toy example of a quantum state
starting from $\qubit{000}$. The first \op{NOT} gate flips the first qubit
from $\qubit{0}$ to $\qubit{1}$. Then \op{CNOT} flips the third qubit.
\op{CCNOT} does not change the state since not all control qubits are in
\qubit{1} state. The following \op{Z} gate on the first qubit adds a
negative sign. Finally, the \op{MCZ} gate acts on $\qubit{101}$ and the
state remains the same.

A Hardmard gate \op{H} is a single qubit gate that transforms the basis states into
a superposition of the basis states,
\begin{equation*}
    \op{H} (\alpha\qubit{0} + \beta\qubit{1})
    = \frac\alpha{\sqrt2}\left(\qubit{0} + \qubit{1}\right) + \frac\beta{\sqrt2}\left(\qubit{0} - \qubit{1}\right)
    = \frac1{\sqrt2}
    \begin{pmatrix}
        1 & 1 \\
        1 & -1
    \end{pmatrix}
    \begin{pmatrix}
        \alpha\\
        \beta
    \end{pmatrix}.
\end{equation*}
One of its important properties is that it can be used to create a uniform superposition of all basis states on multiple qubits that are initialized to $\qubit{0}$,
\begin{equation*}
    \op{H}^{\otimes n}\left(\qubit{0}\qubit{0}\cdots\qubit{0}\right) = \frac1{\sqrt{2^n}}\sum_{i=0}^{2^n-1}\qubit{i}.
\end{equation*}
This is a crucial initialization step in many quantum algorithms, including Grover's algorithm.
\subsection{Grover's Algorithm}
\label{sec:Grover}

Grover's algorithm is a quantum unstructured search algorithm. Given an
oracle circuit $O$ producing a different output for a particular input,
Grover's algorithm starts from an equal probability for all possible
inputs and applies an iterative procedure to amplify the probability of
particular desired inputs. Finally, a measurement would result in one of
the particular inputs with high probability. In this section, we review
Grover's algorithm and take a boolean function solver as an example.

For an $n$-variable boolean function as in \eqref{eq:original-problem},
there are $N = 2^n$ different inputs for $x_1, \dots, x_n$, and we adapt
$n$ qubits to represent the probabilities for all inputs. Another $m$
qubits, known as the oracle workspace or ancillae, are reserved for oracle
circuits. A different construction of the oracle circuits for boolean
functions results in different $m$. The vanilla construction, as in
Section~\ref{sec:basic-oracle-construction}, requires $m$ to be at least
the number of equations in \eqref{eq:original-problem}, $R$. At the
beginning of Grover's algorithm, the first $n$ qubits are initialized as
equal probabilities for all inputs, i.e.,
\begin{equation} \label{eq:psi}
    \qubit{\psi} = \frac{1}{\sqrt{N}} \sum_{i=0}^{N-1} \qubit{i}
\end{equation}
via applying $n$ Hardmard gate \op{H} to $\qubit{0} = \qubit{0}^{\otimes
n}$. The latter $m$ qubits are kept in $\qubit{0} = \qubit{0}^{\otimes m}$.

\begin{figure}[htb]
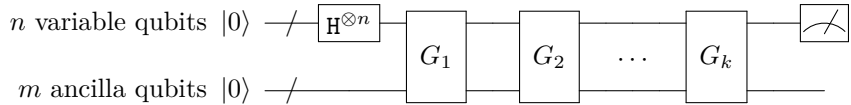

    \centerline{\qcGroverFramework}
    \caption{Grover's algorithm. In vanilla Grover's algorithm, all $G_i$s
    are $G = WO$ for $O$ and $W$ being defined in \eqref{eq:oracle-action}
    and \eqref{eq:householder-action} respectively.}
    \label{fig:Grover-framework}
\end{figure}

After the initialization, Grover's algorithm applies a sequence of Grover
iterations. Each Grover iteration applies two circuits: an oracle $O$ and
a Householder-like diffusion operator $W$. The oracle $O$ is application-dependent and often
treated as a black box in Grover's algorithm. In general unstructured
search problems, the oracle flips the sign of desired solutions while
keeping others unchanged. In solving boolean functions, the oracle flips the
sign of $\qubit{x}$ if $x$ is a solution of \eqref{eq:original-problem}.
More precisely, the action of the oracle $O$ is
\begin{equation} \label{eq:oracle-action}
    O: \qubit{x}\qubit{0}^{\otimes m} \mapsto
    (-1)^{g(x)} \qubit{x}\qubit{0}^{\otimes m},
    \qquad x=0, 1, \dotsc, N-1,
\end{equation}
where $g(x)$ is the solution indicator function as
\begin{equation*}
    g(x) =
    \begin{cases}
        1 & \text{if $x$ is a solution of \eqref{eq:original-problem},} \\
        0 & \text{otherwise.}
    \end{cases}
\end{equation*}
The quantum circuits for the oracle $O$ of the boolean functions are
detailed in Section~\ref{sec:Construction}, and efficient constructions
are also proposed therein. Importantly, we emphasize that the ancilla
qubits have to remain in $\qubit{0} = \qubit{0}^{\otimes m}$ after the
oracle circuit to avoid any side effects. The Householder-like diffusion operation $W
= 2\qubit{\psi}\bra{\psi} - I$ is then applied to the first $n$ qubits,
whose action on a general state $\sum_i c_i\qubit{i}$ admits
\begin{equation} \label{eq:householder-action}
    W: \sum_i c_i \qubit{i} \mapsto
    \sum_i \left( - c_i + 2 \langle c\rangle \right) \qubit{i},
\end{equation}
where $\qubit{\psi}$ is as defined in \eqref{eq:psi}, and $\langle
c\rangle$ is the average of $\{ c_i\}$. This is a crucial step to
amplifies the probabilities of solutions.
Figure~\ref{fig:Grover-framework} shows an illustration of Grover's
algorithm.

Though Grover's algorithm carries an iterative procedure, it is different
from classical iterative methods. Instead of converging to a fixed point
of the iterative mapping, Grover's algorithm conducts a fixed number of
iterations. We take the boolean functions \eqref{eq:original-problem} as
an example. All $N$ possible boolean variables are split into solutions
and non-solutions of \eqref{eq:original-problem} and the number of
solutions is denoted as $M$. The set of all solutions is denoted as
$\calS$. The initial state $\qubit{\psi}$ as in \eqref{eq:psi} could be
rewritten as
\begin{equation*}
    \qubit{\psi} = \cos \frac{\theta}{2} \qubit{\alpha}
    + \sin \frac{\theta}{2} \qubit{\beta},
\end{equation*}
where $\qubit{\alpha}$ and $\qubit{\beta}$ are the superpositions of
non-solutions and solutions, i.e.,
\begin{equation*}
    \qubit{\alpha} = \frac{1}{\sqrt{N-M}} \sum_{x \notin \calS}\qubit{x}
    \quad , \quad
    \qubit{\beta} = \frac{1}{\sqrt{M}} \sum_{x \in \calS}\qubit{x},
\end{equation*}
and $\theta$ is determined by $\cos \frac\theta2 = \sqrt{\frac{N-M}N}$. The
action of $G$ on $\qubit{\psi}$ obeys \footnote{Actually, $G = WO$ is
applied to all $n+m$ qubits, i.e., $O$ is applied to $\qubit{\psi}
\qubit{0}^{\otimes m}$, and $W$ is applied to the first $n$ qubits. For
the sake of notations, we omit the $m$ ancilla qubits in
\eqref{eq:apply-first-G} and \eqref{eq:apply-k-times-of-G}.}
\begin{equation} \label{eq:apply-first-G}
    G \qubit{\psi} = W O \qubit{\psi}
    = W \left(\cos{\frac\theta2}\qubit{\alpha}
    - \sin{\frac\theta2}\qubit{\beta}\right)
    = \cos{\frac{3\theta}2}\qubit{\alpha}
    + \sin{\frac{3\theta}2}\qubit{\beta}.
\end{equation}

\begin{figure}[htb]
    \centering
    \begin{tikzpicture}[x=0.9cm,y=0.9cm]
        \usetikzlibrary {arrows.meta}
        \usetikzlibrary{angles,quotes}
        \coordinate (O) at (0,0);
        \coordinate (A) at (0,4);
        \coordinate (B) at (2.828, 2.828);
        \coordinate (C) at (3.86, 1.035);
        \coordinate (D) at (4,0);
        \coordinate (E) at (3.86, -1.035);

        \draw[-{Stealth[length=2.5mm]}] (O) -- (A) node[pos=1.0, right]{$\qubit{\beta}$};
        \draw[-{Stealth[length=2.5mm]}] (O) -- (D) node[pos=1.0, right]{$\qubit{\alpha}$};
        \draw[-{Stealth[length=2.5mm]},line width=1pt] (O) -- (B) node[pos=1.0, above]{$G\qubit{\psi}$};
        \draw[-{Stealth[length=2.5mm]},line width=1pt] (O) -- (C) node[pos=1.0, above right]{$\qubit{\psi}$};
        \draw[-{Stealth[length=2.5mm]},line width=1pt] (O) -- (E) node[pos=1.0, right]{$O\qubit{\psi}$};
        \draw[-, dashed] (B) -- (E) node[pos=1.0, right]{};
        \draw[-, dashed] (C) -- (E) node[pos=1.0, right]{};

        \pic [draw, -, "$\theta$", angle eccentricity=1.5, angle radius=35] {angle = C--O--B};
        \pic [draw, -, "$\theta / 2$", angle eccentricity=1.5, angle radius=35] {angle = D--O--C};
        \pic [draw, -, "$\theta / 2$", angle eccentricity=1.5, angle radius=35] {angle = E--O--D};
    \end{tikzpicture}
    \caption{The effect of an iteration of Grover's algorithm. From the
    initial state $\qubit{\psi}$, the oracle $O$ flips the sign of the
    solution to $g(x)=1$, as shown in \eqref{eq:oracle-action}.  Then, the
    diffusion operation $W$ makes a reflection of the state
    $O\qubit{\psi}$ by $\qubit{\psi}$.  This single step of iteration rotates
    $\qubit{\psi}$ to $G\qubit{\psi}$ by an angle of $\theta$. This step
    reduces the probability of any state in $\qubit{\alpha}$ (i.e., wrong
    solution) being the outcome of a measurement, which is $|\braket{\psi
    | \alpha}|^2$.  Note that rotating over $\qubit{\beta}$, i.e., repeating
    the iteration more than $K$ times computed in
    \eqref{eq:GroverIterNum}, will contradictorily increase the
    probability of any state in $\alpha$ being the outcome of a
    measurement.}
    \label{fig:Grover-Rotation}
\end{figure}
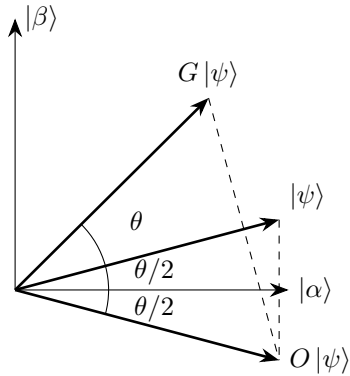

As illustrated in Figure~\ref{fig:Grover-Rotation}, the action of
$G$ could be understood in a geometrical way. Applying the oracle $O$
flips the sign of $\qubit{\alpha}$. Then, the Householder-like operation
$W$ makes the state $O \qubit{\psi}$ mirrored with respect to
$\qubit{\psi}$, and results in a state $G\qubit{\psi}$ have greater
overlapping with $\qubit{\beta}$. Applying Grover iteration $k$ times
leads to
\begin{equation} \label{eq:apply-k-times-of-G}
    G^{k}\qubit{\psi} = (WO)^{k}\qubit{\psi} =
    \cos\left(\frac{2k+1}{2} \theta \right) \qubit{\alpha} +
    \sin\left(\frac{2k+1}{2} \theta \right) \qubit{\beta}.
\end{equation}
Repeating the iteration for
\begin{equation} \label{eq:GroverIterNum}
    K = \mathrm{round} \left( \frac{\arccos\sqrt{M/N}}{\theta} \right)
\end{equation}
times makes $G^K \qubit{\psi}$ sufficiently close to $\qubit{\beta}$ and
the measurement would fall into one of the solutions with a probability
$\sin^2\left( \frac{2K+1}2\theta \right)$, which is close to $1$.

\section{Efficient Oracle Construction}
\label{sec:Construction}

The oracle for boolean functions can be constructed either in a
qubit-efficient way or in a depth-efficient way. We first show both the
qubit-efficient and depth-efficient vanilla constructions of the oracle in
Section~\ref{sec:basic-oracle-construction}. In
Section~\ref{sec:recursive-oracle-construction}, we propose a novel
recursive oracle construction, which exploits the maximum possible number
of equations for a given number of ancillae. Finally, a rearranging
technique is introduced in Section~\ref{sec:oracle-compression} to further
reduce the depth of the circuit by greedily exploring possible
parallelization.

\subsection{Basic Oracle Construction}
\label{sec:basic-oracle-construction}

We first introduce a straightforward oracle construction for a single
boolean equation. Given a boolean equation with $n$ variables, $f(x) = 0$,
we adopt $n+1$ qubits for the oracle, where the first $n$ qubits represent
boolean variables and the last ancilla qubit is used to track the value of
$f(x)$. Without loss of generality, we assume $f(x)$ is in a sum of
product form, as in \eqref{eq:original-problem}. Then each product term is
implemented by \op{CNOT}, \op{CCNOT}, or \op{MCX}. For example, a
\op{CNOT} controlling from the $i$-th qubit to the ancilla represents the
$x_i$ term; a \op{CCNOT} controlling from the $i$- and $j$-th qubits to
the ancilla represents the $x_i x_j$ term; other product terms with more
than two variables could be implemented by an \op{MCX} controlling from
qubits corresponding to the variables therein to the ancilla. After
applying these gates for all product terms in $f(x)$, the ancilla is in
$\qubit{0}$ if $\qubit{x}$ satisfies $f(x) = 0$, and in $\qubit{1}$ if
$\qubit{x}$ satisfies $f(x) = 1$. Then we apply an \op{X} gate followed by
a \op{Z} gate on the ancilla such that a negative sign is applied for all
$\qubit{x}$ solves $f(x) = 0$. Finally, to reset the ancilla qubit to
$\qubit{0}$ after the oracle as in \eqref{eq:oracle-action}, all previous
gates except the last \op{Z} gate are applied one more time. A detailed
quantum circuit of the oracle for $f(x) = x_1 \oplus x_1 x_2 = 0$ as well
as its abbreviation are given in Figure~\ref{fig:sample-equation-circuit}.
In the rest of the paper, the abbreviated quantum circuit is called the
\emph{function-controlled-not} gate or \emph{$f_i$-controlled-not} gate
for a particular boolean function $f_i$. In all later quantum circuit
diagrams, we will only draw the abbreviated diagram instead of complex
circuits.

\begin{figure}[htb]
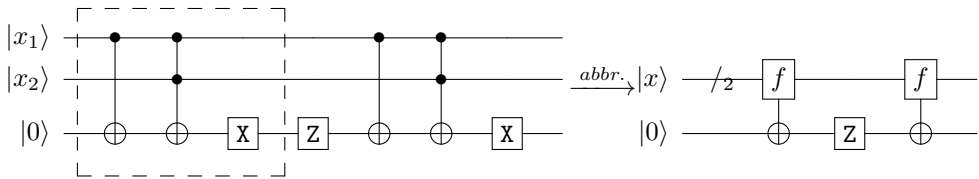

    \centerline{\qcExampleFunctionControl}
    \caption{Oracle circuit for boolean equation: $f(x) = x_1 \oplus x_1
    x_2 =0$.  The extra \op{X} gate ensures the ancilla in state $\qubit{1}$
    if $f(x)=0$.}
    \label{fig:sample-equation-circuit}
\end{figure}

When the boolean equation system involves more than one equation, i.e.,
$R$ equations in \eqref{eq:original-problem} are denoted as $f_1, \dots,
f_R$, there are two basic constructions for the oracle: 1) multiplying
equations together and treating them as a single boolean equation; 2)
adopting $R$ ancilla qubits to track the values of all boolean equations.
We briefly introduce both basic constructions and discuss their drawback.
Then a novel recursive construction for oracle is proposed in
Section~\ref{sec:recursive-oracle-construction}.

For a set of equations $f_i(x) = 0, i = 1, \dotsc, R$, one can multiply
them together as
\begin{equation} \label{eq:multiple-all-equations-into-one}
    f(x) \triangleq (f_1(x) \oplus 1) (f_2(x) \oplus 1) \cdots
    (f_R(x) \oplus 1) \oplus 1.
\end{equation}
A solution of $f(x^*) = 0$ for $f(x)$ as in
\eqref{eq:multiple-all-equations-into-one} requires that $f_i(x^*) = 0$
for all $i = 1, \dots, R$. Hence solving $f(x) = 0$ is equivalent to solve
the boolean equation system \eqref{eq:original-problem}. Then $f(x)$ is
transformed to a sum of product form and the oracle is constructed as we
discussed above. For example, given a quadratic boolean equation system
\begin{equation*}
    \begin{split}
        f_1(x) = & x_1 \oplus x_1 x_2 = 0, \\
        f_2(x) = & x_3 x_4 = 0, \\
        f_3(x) = & x_1 x_4 =0, \\
        f_4(x) = & x_2 \oplus x_3 \oplus x_4 = 0, \\
    \end{split}
\end{equation*}
the sum of product form of $f(x)$ admits
\begin{equation*}
    f(x) = x_1 x_2 x_3 \oplus x_1 x_3 x_4 \oplus x_2 x_3 x_4
    \oplus x_1 x_2 \oplus x_1 x_3 \oplus x_1 x_4
    \oplus x_3 x_4 \oplus x_1 \oplus x_2 \oplus x_3 \oplus x_4.
\end{equation*}
In transformation to the sum of product form, there are cancellations to
reduce the number of product terms. However, the number of product terms
in $f(x)$ is often found to be much larger than that in $f_i(x)$s. Another
drawback is caused by the \op{MCX}. The number of variables in the product
term in $f(x)$ is larger than that in $f_i(x)$s. Hence the corresponding
\op{MCX} has more controlling qubits. It is generally considered that a
quantum gate operation involving many qubits is more difficult to
implement, and an approximated model~\cite{QuantumTextbook} shows an
exponential growth of the circuit depth to the number of controlling
qubits. The circuit depth of such an oracle construction is much larger
than all later oracle constructions and is often found to be impractical
on current quantum devices.

\begin{figure}[htb]
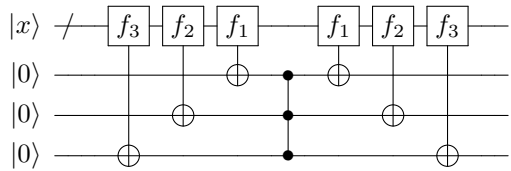

    \centerline{\qcExampleStackAll}
    \caption{A vanilla stack style oracle construction for a boolean
    equation system with three equations.}
    \label{fig:simple-stack-all-oracle-example}
\end{figure}

Another basic construction of the oracle for the boolean equation system
with $R$ equation is to adopt $R$ ancilla qubits to track the outcomes of
all equations. For each $i = 1, \dots, R$, we construct the circuit for
$f_i(x)$ with the $i$-th ancilla qubit. Then $R$ ancilla qubits now keep
the outcome of all boolean equations. The $i$-th ancilla qubit is in
$\qubit{1}$ if $f_i(x) = 0$, and in $\qubit{0}$ otherwise. A solution of
\eqref{eq:original-problem} satisfies all boolean equations, and hence all
ancilla qubits are in $\qubit{1}$. An \op{MCZ} gate is then applied and
flips the sign if all ancilla qubits are in $\qubit{1}$. After applying
the \op{MCZ} gate, we revert all ancilla qubits to their initialized state
by applying all previous single equation circuits in the reversed order.
Figure~\ref{fig:simple-stack-all-oracle-example} provides an example of a
boolean equation system with three equations. In the rest of the paper, we call
such an oracle construction the vanilla stack style. If we assume that
each boolean equation is quadratic and has $P$ product terms, the vanilla
stack style oracle construction has circuit depth $RP + C$ and uses $R$
ancilla qubits, where $C$ is the constant circuit depth of \op{MCZ}
gate.~\footnote{The circuit depth counts the longest sequence of simple
gates, including 1-qubit gates and a few multi-qubit gates, i.e.,
\op{CNOT}, \op{CCNOT}. The \op{MCZ} gate should be decomposed into \op{H}
and \op{MCX} gates, whose depth is denoted as a constant $C$.}

\subsection{Recursive Oracle Construction}
\label{sec:recursive-oracle-construction}

We provide a flexible way of constructing the oracle making a trade-off
between the number of qubits and circuit depth. In our novel construction,
none of the boolean equations are multiplied together. The idea, in a
nutshell, is to build the circuit recursively based on the vanilla stack
style. In the following, we propose the recursive oracle construction in
detail and explore the maximum number of possible boolean equations that
could be constructed on $m$ ancilla qubits. A building block in the
recursive oracle construction is denoted as $\U{\ell}{m}$, where $\ell$
denotes the recursive level and $m$ denotes the number of ancilla qubits.
We will first introduce the quantum circuit $\U{1}{m}$, which is similar
to the vanilla stack style. Then a recursive construction from
$\U{\ell-1}{1}, \U{\ell-1}{2}, \dots, \U{\ell-1}{m-1}$ to $\U{\ell}{m}$ is
proposed. Finally, we construct the oracle quantum circuit based on many
$\U{\ell}{m}$s.

\textbf{Quantum circuit for $\U{1}{m}$.} The idea behind $\U{1}{m}$ is to
fully exploit all ancilla qubits while storing the desired information in
only the last ancilla qubit. Given $m-1$ boolean functions, $f_1(x),
\dots, f_{m-1}(x)$, we first construct a sequence of
function-controlled-not gates controlling from $\qubit{x}$ to the 1st,
2nd, $\dots$, $(m-1)$-th ancilla qubits for boolean function $f_1, f_2,
\dots, f_{m-1}$, respectively. Then an \op{MCX} gate is applied
controlling from ancilla qubits $1, 2, \dots, m-1$ to the $m$-th ancilla
qubits. After these gate operations, the $m$-th ancilla qubit is in
$\qubit{1}$ if $f_1 = 0, f_2 = 0, \dots, f_{m-1} = 0$ are satisfied. Hence
the solutions of $f_i(x) = 0$ for $i=1, 2, \dots, m-1$ are tracked in the
$m$-th ancilla qubit. In order to reuse other ancilla qubits, we apply the
same sequence of function-controlled-not gates in the reversed ordering
and reset ancilla qubits $1, 2, \dots, m-1$ to $\qubit{0}$.
Figure~\ref{fig:level-1-V-shape} depicts an example for $\U{1}{4}$. When
$m = 1$, we use function-controlled-not gate directly on the ancilla.
Hence $\U{1}{1}$ is the same as a single function-controlled-not gate.

\begin{figure}[htb]
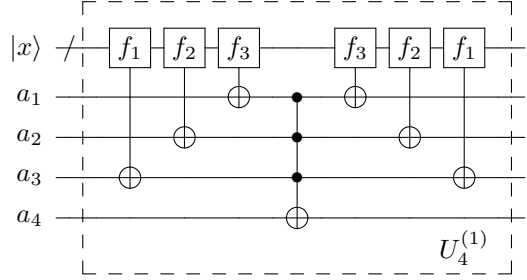

    \centerline{\qcExampleRecursiveLevelOne}
    \caption{Quantum circuit for $\U{1}{4}$. The function-controlled-gate
    is as defined in Figure~\ref{fig:sample-equation-circuit}.}
    \label{fig:level-1-V-shape}
\end{figure}

\textbf{Recursion from $\U{\ell-1}{j}$s to $\U{\ell}{m}$.} Assume we have
constructed all quantum circuits $\U{\ell-1}{1}, \dots, \U{\ell-1}{m-1}$
at previous level $\ell - 1$. We further assume that after applying
$\U{\ell-1}{j}$ only the $j$-th ancilla qubit is affected, and all other
ancilla qubits remain unchanged. The construction of $\U{\ell}{m}$ is as
follows. We first apply $\U{\ell-1}{m-1}, \U{\ell-1}{m-2}, \dots,
\U{\ell-1}{1}$ in order. Then an \op{MCX} gate is applied controlling from
ancilla qubits $1, 2, \dots, m-1$ to the $m$-th ancilla qubit. After
these gate operations, the $m$-th ancilla qubit is in $\qubit{1}$ if all
boolean functions behind $\U{\ell-1}{1}, \dots, \U{\ell-1}{m-1}$ are
satisfied. All these boolean functions are called the boolean functions
behind $\U{\ell}{m}$. Similar as that in $\U{1}{m}$, we apply
$\U{\ell-1}{1}, \U{\ell-1}{2}, \dots, \U{\ell-1}{m-1}$ in order to reset
ancilla qubits $1, 2, \dots, m-1$ to $\qubit{0}$. Importantly, the
ordering of $\U{\ell-1}{1}, \dots, \U{\ell-1}{m-1}$ cannot be changed
since $\U{\ell-1}{j}$s are not commutable. Noticeably, $\U{2}{1}$ is not
defined under our recursion and so are $\U{\ell}{1}$ for $\ell > 1$. In
our recursion, the actual circuits for $\U{\ell}{1}$ are $\U{1}{1}$. For
the sake of notation, $\U{\ell}{m}$ for $\ell > m$ are defined as
$\U{\ell}{m} \triangleq \U{m}{m}$. Figure~\ref{fig:level-k-V-shape}
depicts the construction of $\U{\ell}{4}$ from $\U{\ell-1}{1},
\U{\ell-1}{2}, \U{\ell-1}{3}$.

\begin{figure}[htb]
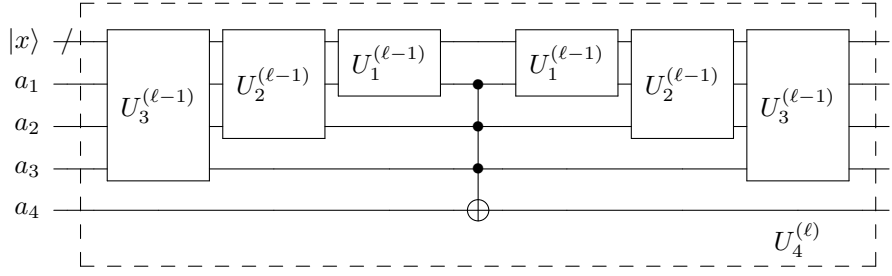

    \centerline{\qcExampleRecursiveLevelK}
    \caption{Quantum circuit for $\U{\ell}{4}$ constructed from
    $\U{\ell-1}{1}$, $\U{\ell-1}{2}$, and $\U{\ell-1}{3}$.}
    \label{fig:level-k-V-shape}
\end{figure}

\textbf{Recursive oracle construction.} Given a recursive level $\ell > 1$
and $m$ ancilla qubits, the quantum circuit for the oracle is composed of
$\U{\ell-1}{1}, \dots, \U{\ell-1}{m}$.~\footnote{For an oracle circuit
with $\ell = 1$, the construction falls back to the vanilla stack style.
Hence we only introduce oracle constructions with $\ell > 1$ in the rest of the
paper.} We first apply $U_m^{(\ell-1)}$ to $U_{1}^{(\ell-1)}$ in order.
Then an \op{MCZ} gate is applied to all ancilla qubits, which introduces
the sign flip for solutions. Finally, $U_1^{(\ell-1)}$ to $U_m^{(\ell-1)}$
are applied in order to reset all ancilla qubits to $\qubit{0}$. The
overall structure for the oracle circuit is similar to that in vanilla
stack style except for function-controlled-gates being replaced by our
recursive circuits $\U{\ell}{j}$s. Figure~\ref{fig:V-shape-circuit}
depicts the oracle circuits with $\ell$ and $m = 4$. Another example is
given in Figure~\ref{fig:level-2-oracle-example} for $\ell = 2$ and $m =
3$, where all $\U{\ell}{j}$s are expanded into function-controlled-gates.
The level $\ell$ oracle construction pseudocode is detailed in
Algorithm~\ref{alg:oraclecircuit} and the recursion pseudocode for
$\U{\ell}{m}$ is in Algorithm~\ref{alg:Ucircuit} for a boolean equation
system with $R$ equations and $m$ ancilla qubits. In these pseudocodes,
the set of boolean equations $\calF = \{f_i\}_{i=1}^R$ is a global
variable and $\calF$.pop() pops out a boolean function from the set. If
the set $\calF$ is empty, $\calF$.pop() then returns a trivial boolean
equation $f(x) \equiv 0$. The $f$-controlled-gate, in this case, is empty,
and no quantum gate is appended to the circuit.

\begin{figure}[htb]
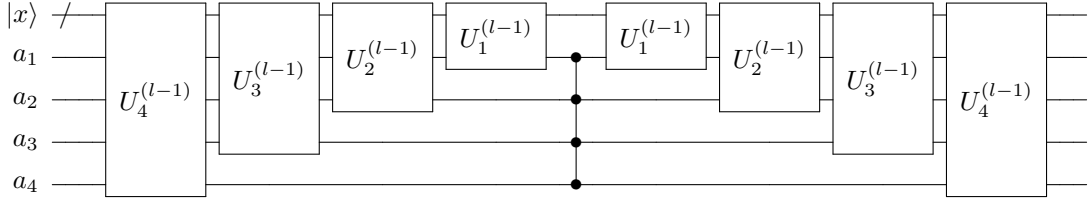

    \centerline{\qcExampleRecursiveCircuit}
    \caption{A level $\ell$ recursive construction of oracle with $m = 4$
    ancilla qubits.} \label{fig:V-shape-circuit}
\end{figure}

\begin{figure}[htb]
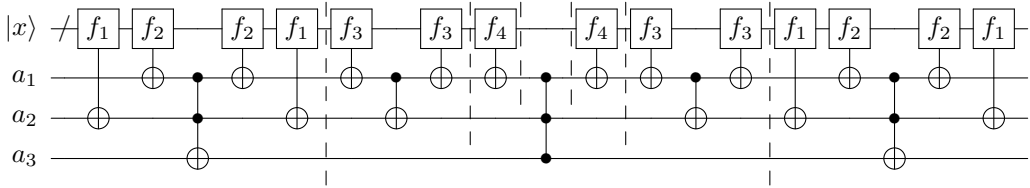

    \centerline{\qcExampleRecursiveLevelTwoExample}
    \caption{A level $\ell = 2$ recursive construction of oracle with $m =
    3$ ancilla qubits for a boolean equation system with 4 equations.}
    \label{fig:level-2-oracle-example}
\end{figure}

\begin{algorithm}
    \caption{$\U{\ell}{m}$ construction.}
    \label{alg:Ucircuit}
    \begin{algorithmic}[1]
        \Require target ancilla qubits $m$, recursive level $\ell$, and
        set of boolean equations $\calF = \{f_i\}_{i=1}^R$.
        \Ensure quantum circuit for $\U{\ell}{m}$.
        \Procedure{Ucircuit}{$\ell$, $m$}
            \IIf {$\ell > m$}
                \Return {\Call{Ucircuit}{$m$, $m$}}
            \EndIIf
            \IIf {$R = 0$}
                \Return {None}
            \EndIIf
            \If {$\ell = 0$}
                \State {$f = \calF$.pop()}
                \State {\Return {$f$-controlled gate controlling $m$-th
                ancilla qubit}}
            \EndIf
            \If {$\ell = 1$}
                \If {$m = 1$}
                    \State {$f = \calF$.pop()}
                    \State {\Return {$f$-controlled gate}}
                \EndIf
                \For {$j = m-1, m-2, \dots, 1$}
                    \State {$f_j = \calF$.pop()}
                    \State {Append $f_j$-controlled gate controlling
                    $j$-th ancilla qubit to output circuit}
                \EndFor
                \State {Append \op{MCX} gate controlling from $1,\dots,m-1$
                ancilla qubits to $m$ ancilla qubit to output circuit}
                \For {$j = 1, 2, \dots, m-1$}
                    \State {Append $f_j$-controlled gate controlling
                    $j$-th ancilla qubit to output circuit}
                \EndFor
            \Else
                \For {$j = m-1, m-2, \dots, 1$}
                    \State {$\U{\ell - 1}{j} = $ \Call{Ucircuit}{$\ell-1$, j}}
                    \State {Append $\U{\ell-1}{j}$ to output circuit}
                \EndFor
                \State {Append \op{MCX} gate controlling from $1,\dots,m-1$
                ancilla qubits to $m$ ancilla qubit to output circuit}
                \For {$j = 1, 2, \dots, m-1$}
                    \State {Append $\U{\ell-1}{j}$ to output circuit}
                \EndFor
            \EndIf
            \State {\Return {output circuit}}
        \EndProcedure
    \end{algorithmic}
\end{algorithm}

\begin{algorithm}
    \caption{Oracle construction.}
    \label{alg:oraclecircuit}
    \begin{algorithmic}[1]
        \Require $m$ ancilla qubits, recursive level $\ell$, and
        set of boolean equations $\calF = \{f_i\}_{i=1}^R$.
        \Ensure oracle quantum circuit.
        \Procedure{Oracle}{$\ell$, $m$}
            \For {$j = m, m-1, m-2, \dots, 1$}
                \State {$\U{\ell - 1}{j} = $ \Call{Ucircuit}{$\ell-1$, j}}
                \State {Append $\U{\ell-1}{j}$ to output circuit}
            \EndFor
            \State {Append \op{MCZ} gate on all ancilla qubits to output
            circuit}
            \For {$j = 1, 2, \dots, m-1, m$}
                \State {Append $\U{\ell-1}{j}$ to output circuit}
            \EndFor
        \EndProcedure
    \end{algorithmic}
\end{algorithm}

Next, we calculate the capacity of the oracle or $\U{\ell}{m}$, i.e., the
maximum number of boolean equations that could be constructed into the
oracle or $\U{\ell}{m}$. Theorem~\ref{thm:Ucapacity} states the capacity
of $\U{\ell}{m}$. Corollary~\ref{cor:oraclecapacity} then sums the
capacities of $\U{\ell - 1}{j}$ for $j = 1, 2, \dots, m-1$ and obtains the
capacity of a level $\ell$ oracle with $m$ ancilla qubits.

\begin{theorem}
    \label{thm:Ucapacity}
    Let $\N{\ell}{m}$ denote the capacity for the level $\ell$ recursive
    quantum circuit $\U{\ell}{m}$ on $m$ ancilla qubits.
    For various scenarios of $\ell$ and $m$, we have
    \begin{equation*}
        \N{\ell}{m} =
        \begin{cases}
            1         & \ell \geq 1, m = 1, \\
            2^{m-2}   & \ell \geq m - 1, m \geq 2, \\
            \sum_{j=0}^\ell \binom{m-2}{j} & \mathrm{Otherwise.}
        \end{cases}
    \end{equation*}
\end{theorem}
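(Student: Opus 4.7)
The plan is to induct on the recursive level $\ell$, using a recursion on $\N{\ell}{m}$ that is forced by the construction itself, and then collapse the resulting double sum via the hockey stick identity.

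First I would extract the combinatorial recursion. By the construction of $\U{\ell}{m}$ for $\ell, m \geq 2$, the circuit is the concatenation $\U{\ell-1}{m-1}\cdots\U{\ell-1}{1}$, a single \op{MCX}, and the uncomputation sequence $\U{\ell-1}{1}\cdots\U{\ell-1}{m-1}$. Since the boolean equations consumed by different sub-blocks are disjoint (each $\calF$.pop() removes one from the global pool) while the uncomputation half consumes none, the equation count satisfies
\begin{equation*}
    \N{\ell}{m} = \sum_{j=1}^{m-1} \N{\ell-1}{j}, \qquad \ell \geq 2, \; m \geq 2.
\end{equation*}
The base cases come from reading the $\U{1}{m}$ diagram (which stores exactly one equation per non-target ancilla, so $\N{1}{m} = m - 1$ for $m \geq 2$) and the convention $\U{\ell}{1} = \U{1}{1}$, giving $\N{\ell}{1} = 1$.

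Next, I would unify the two non-trivial branches of the theorem into the single expression
\begin{equation*}
    \N{\ell}{m} = \sum_{j=0}^{\ell} \binom{m-2}{j}, \qquad m \geq 2, \; \ell \geq 1,
\end{equation*}
under the standard convention $\binom{m-2}{j} = 0$ whenever $j > m - 2$. For $\ell \geq m - 1$ all extra summands vanish and the sum collapses to $2^{m-2}$, matching the middle branch; for $\ell < m - 1$ this is verbatim the "otherwise" branch. The induction base $\ell = 1$ is immediate since $\sum_{j=0}^{1}\binom{m-2}{j} = 1 + (m-2) = m-1$. For the inductive step, substitute the level-$(\ell-1)$ formula into the recursion, separating the $j=1$ term and swapping the summation order:
\begin{equation*}
    \N{\ell}{m} = 1 + \sum_{i=0}^{\ell-1} \sum_{k=0}^{m-3} \binom{k}{i},
\end{equation*}
after reindexing $k = j - 2$. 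The inner sum is exactly the hockey stick identity, giving $\binom{m-2}{i+1}$, and a shift of the outer index yields $\N{\ell}{m} = 1 + \sum_{i=1}^{\ell}\binom{m-2}{i} = \sum_{i=0}^{\ell}\binom{m-2}{i}$, closing the induction.

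The main obstacle is purely bookkeeping rather than combinatorial: one must argue carefully that the sub-blocks in the definition of $\U{\ell}{m}$ genuinely partition the equation pool (so that the additive recursion is correct) and that the degenerate conventions $\U{\ell}{1} = \U{1}{1}$ and $\U{\ell}{m} = \U{m}{m}$ for $\ell > m$ are absorbed by the vanishing-binomial convention in the unified formula. Once the recursion and the boundary cases are set up consistently, the hockey stick identity carries the rest of the proof without further friction.
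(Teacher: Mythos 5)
Your proposal is correct, and it starts from the same recursion the paper uses, namely $\N{\ell}{m} = \sum_{j=1}^{m-1}\N{\ell-1}{j}$ with $\N{1}{m}=m-1$ and the saturation/degenerate conventions, but it solves that recursion by a genuinely different route. The paper splits the analysis into two regimes: it first proves $\N{\ell}{m}=2^{m-2}$ for $\ell\geq m-2$ by induction on $m$, and then, for the unsaturated regime, introduces the first difference $\T{\ell}{m}=\N{\ell}{m}-\N{\ell-1}{m}$, shows it obeys the Pascal recursion $\T{\ell}{m}=\T{\ell}{m-1}+\T{\ell-1}{m-1}$ with boundary values $\T{2}{m}=\binom{m-2}{2}$ and $\T{m-2}{m}=1$, identifies $\T{\ell}{m}=\binom{m-2}{\ell}$, and telescopes. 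You instead absorb both branches into the single expression $\sum_{j=0}^{\ell}\binom{m-2}{j}$ via the vanishing-binomial convention and run one induction on $\ell$, with the hockey stick identity $\sum_{k=0}^{m-3}\binom{k}{i}=\binom{m-2}{i+1}$ closing the step; I checked the index bookkeeping and it is right. Your version is more economical — no case split, no auxiliary sequence — while the paper's version makes the Pascal-triangle structure of the increments explicit, which is why the binomial coefficients appear in the first place. The one point you must not gloss over (and which you correctly flag as bookkeeping) is that the additive recursion is only forced by the construction for $2\leq\ell\leq m$; for $\ell>m$ the circuit is \emph{defined} as $\U{m}{m}$, so you need the short extra observation that $\sum_{j=1}^{m-1}\N{\ell-1}{j}$ stabilizes to $\sum_{j=1}^{m-1}\N{j}{j}$ once $\ell>m$, so that your unified formula (which is constant in $\ell$ for $\ell\geq m-2$) remains consistent with that convention.
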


\begin{proof}

For various scenarios of $\ell$ and $m$, we have
\begin{equation}
    \label{eq:recursive-formula}
    \N{\ell}{m} =
    \begin{cases}
        1         & \ell = 1, m = 1, \\
        m-1       & \ell = 1, m \geq 2, \\
        \N{m}{m}  & \ell > m, \\
        \sum_{j=1}^{m-1} \N{\ell-1}{j}  & \mathrm{Otherwise.}
    \end{cases}
\end{equation}
The recursive formula~\eqref{eq:recursive-formula} can be obtained
directly from the process of constructing the circuit $\U{\ell}{m}$.
Obviously, we have $\N{\ell}{1} = 1$ and $\N{\ell}{2} = 1$ for all $\ell
\geq 1$.

We first derive the expression of $\N{\ell}{m}$ for $\ell \geq m-2$ and $m
\geq 3$. The claim is that $\N{\ell}{m} = 2^{m-2}$ for $\ell \geq m-2$ and
$m \geq 3$. From the recursive formula~\eqref{eq:recursive-formula}, we
have $\N{1}{3} = 2$, $\N{2}{3} = \N{1}{2} + \N{1}{1} = 2$, $\N{3}{3} =
\N{2}{2} + \N{2}{1} = 2$, and $\N{\ell}{3} = \N{3}{3} = 2$ for all $\ell >
3$. If $\N{j}{n} = 2^{n-2}$ holds for all $j \geq n - 2$ and $n = 3,
\dots, m-1$, then we have
\begin{equation*}
    \begin{split}
        \N{m-2}{m} = & \sum_{j=1}^{m-1} \N{m-3}{j} = \N{m-3}{1}
        + \N{m-3}{2} + \sum_{j = 3}^{m-1} \N{m-3}{j} = 1 + 1
        + \sum_{j=3}^{m-1} 2^{j-2} = 2^{m-2}, \\
        \N{m-1}{m} = & \sum_{j=1}^{m-1} \N{m-2}{j} = \N{m-2}{1}
        + \N{m-2}{2} + \sum_{j = 3}^{m-1} \N{m-2}{j} = 1 + 1
        + \sum_{j=3}^{m-1} 2^{j-2} = 2^{m-2}, \\
        \N{\ell}{m} = \N{m}{m} = & \sum_{j=1}^{m-1} \N{m-1}{j} = \N{m-1}{1}
        + \N{m-1}{2} + \sum_{j = 3}^{m-1} \N{m-1}{j} = 1 + 1
        + \sum_{j=3}^{m-1} 2^{j-2} = 2^{m-2}, \\
    \end{split}
\end{equation*}
for all $\ell > m$. By induction, we prove the claim.

Finally, we derive the expression of $\N{\ell}{m}$ for $2 \leq \ell \leq m-3$
and $m \geq 5$. The recursive formula~\eqref{eq:recursive-formula} could
be written as
\begin{equation} \label{eq:recursive-formula2}
    \N{\ell}{m} = \sum_{j=1}^{m-1} \N{\ell-1}{j}
    = \N{\ell-1}{m-1} + \sum_{j=1}^{m-2} \N{\ell-1}{j}
    = \N{\ell-1}{m-1} + \N{\ell}{m-1}.
\end{equation}
Let us introduce an auxiliary variable and its recursive formula,
\begin{equation} \label{eq:auxiliary-recursive-formula}
    \T{\ell}{m} \triangleq \N{\ell}{m} - \N{\ell-1}{m}
    = \N{\ell-1}{m-1} + \N{\ell}{m-1} - \N{\ell-2}{m-1}
    - \N{\ell-1}{m-1}
    = \T{\ell}{m-1} + \T{\ell-1}{m-1}
\end{equation}
for $2 \leq l \leq m-2$ and $m \geq 4$.
For $m \geq 5$, we have
\begin{equation*}
    \T{m-2}{m} = \N{m-2}{m} - \N{m-3}{m}
    = 2^{m-2} - (\N{m-4}{m-1} + \N{m-3}{m-1})
    = \N{m-3}{m-1} - \N{m-4}{m-1}
    = \T{m-3}{m-1},
\end{equation*}
where the first and last equalities adopt the definition of $\T{\ell}{m}$,
and the second and third equalities adopt the recursive
formula~\eqref{eq:recursive-formula2} and $\N{\ell}{m} = 2^{m-2}$ for
$\ell \geq m-2$. Through a direct calculation, we obtain, $\T{m-2}{m} =
\T{3}{5} = \T{2}{4} = 1$ for all $m \geq 4$. This is one boundary
condition for the recursive formula of $\T{\ell}{m}$. For the other
boundary, we have
\begin{equation*}
    \T{2}{m} = \N{2}{m} - \N{1}{m}
    = \sum_{j=1}^{m-1} \N{1}{j} - (m-1)
    = \sum_{j=2}^{m-2} \N{1}{j}
    = \sum_{j=1}^{m-3} j = \binom{m-2}{2}
\end{equation*}
for $m \geq 4$. Based on the boundary conditions and recursive
formula~\eqref{eq:auxiliary-recursive-formula}, we notice that
$\T{\ell}{m}$ is a part of Yang Hui's triangle~(Pascal's triangle)
and admits the expression
\begin{equation}
    \T{\ell}{m} = \binom{m-2}{\ell}
\end{equation}
for $2 \leq \ell \leq m-2$ and $m \geq 4$.
Solving \eqref{eq:auxiliary-recursive-formula} for $\N{\ell}{m}$, we
obtain
\begin{equation*}
    \N{\ell}{m} = \T{\ell}{m} + \N{\ell-1}{m} = \sum_{j=2}^\ell \T{j}{m}
    + \N{1}{m} = \sum_{j=2}^\ell \binom{m-2}{j} + m-1
    = \sum_{j=0}^\ell \binom{m-2}{j}.
\end{equation*}
When $\ell = m-2$ and $m \geq 4$, the expression $\sum_{j=0}^{\ell} \binom{m-2}{j} =
2^{m-2}$ coincides with the above claim. This proves the theorem.
\end{proof}

\begin{corollary} \label{cor:oraclecapacity}
    Let $\F{\ell}{m}$ denote the capacity for the level $\ell$ recursive
    \emph{oracle circuit} on $m$ ancilla qubits. For various scenarios of
    $\ell$ and $m$, we have
    \begin{equation} \label{eq:corollary-recursive-formula}
        \F{\ell}{m} =
        \begin{cases}
            2^{m-1},  & \ell \geq m - 1 \\
            \sum_{j=0}^\ell \binom{m-1}{j}, & \mathrm{Otherwise}
        \end{cases}.
    \end{equation}
\end{corollary}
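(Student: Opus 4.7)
The plan is to reduce the corollary to Theorem~\ref{thm:Ucapacity} by recognizing the oracle circuit as one extra layer of recursion. From Algorithm~\ref{alg:oraclecircuit}, the level-$\ell$ oracle on $m$ ancillae consists of applying $\U{\ell-1}{m}, \U{\ell-1}{m-1}, \dots, \U{\ell-1}{1}$, then an $\op{MCZ}$, then the uncomputation in reverse order. Each sub-block $\U{\ell-1}{j}$ contributes $\N{\ell-1}{j}$ equations, and these contributions are disjoint since $\calF$.pop() removes consumed equations. Therefore the capacity is simply
\begin{equation*}
    \F{\ell}{m} = \sum_{j=1}^{m} \N{\ell-1}{j}.
\end{equation*}

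The key observation I would exploit is that this sum has almost the same shape as the recursion already established in \eqref{eq:recursive-formula}, namely $\N{\ell}{m'} = \sum_{j=1}^{m'-1} \N{\ell-1}{j}$ (in the nontrivial regime). Taking $m' = m+1$ gives the clean identity
\begin{equation*}
    \F{\ell}{m} = \N{\ell}{m+1},
\end{equation*}
reducing the corollary to a direct substitution into Theorem~\ref{thm:Ucapacity} with $m$ replaced by $m+1$.

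With this in hand, the two cases fall out almost mechanically. For $\ell \geq m$, Theorem~\ref{thm:Ucapacity} gives $\N{\ell}{m+1} = 2^{(m+1)-2} = 2^{m-1}$. For the remaining cases with $\ell \leq m-1$, the ``otherwise'' branch of the theorem yields $\N{\ell}{m+1} = \sum_{j=0}^{\ell} \binom{m-1}{j}$, matching the second line of \eqref{eq:corollary-recursive-formula}. The only point that needs a little care is the boundary $\ell = m-1$, which the corollary places in the $2^{m-1}$ branch while Theorem~\ref{thm:Ucapacity} (applied to $\N{\ell}{m+1}$) places it in the ``otherwise'' branch; I would verify consistency via the identity $\sum_{j=0}^{m-1}\binom{m-1}{j} = 2^{m-1}$.

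The main potential obstacle is double-checking that the reduction $\F{\ell}{m} = \N{\ell}{m+1}$ is justified in the edge regimes (small $\ell$ or small $m$), since the recursive formula \eqref{eq:recursive-formula} branches differently depending on whether $\ell \leq m-1$ or $\ell > m$. I would handle this by treating the tiny cases ($m=1$, $m=2$) by hand and invoking the algorithm's convention $\U{\ell}{m} \triangleq \U{m}{m}$ for $\ell > m$ to see that the identity extends uniformly. Beyond that, the argument is just a bookkeeping of Theorem~\ref{thm:Ucapacity}.
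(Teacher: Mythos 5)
Your proposal is correct and follows essentially the same route as the paper, which likewise observes that the level-$\ell$ oracle on $m$ ancillae has the same capacity as $\U{\ell}{m+1}$ and then substitutes into Theorem~\ref{thm:Ucapacity}. Your explicit check of the boundary $\ell = m-1$ via $\sum_{j=0}^{m-1}\binom{m-1}{j} = 2^{m-1}$ is a welcome detail the paper leaves implicit, but it does not change the argument.
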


Corollary~\ref{cor:oraclecapacity} could be derived directly from
Theorem~\ref{thm:Ucapacity}. The capacities of $\U{\ell}{m+1}$ and the
level $\ell$ oracle circuit on $m$ ancilla qubits are the same. Hence we
could reorganize the capacity in Theorem~\ref{thm:Ucapacity} and lead to
Corollary~\ref{cor:oraclecapacity}.

\begin{table}[htb]
    \centering
    \begin{tabular}{c|cccccccccc}
        \toprule
        \multirow{2}{*}{$\ell$} & \multicolumn{10}{c}{$m$}\\
        \cmidrule(lr){2-11}
           & 1 & 2 & 3 & 4 & 5  & 6  & 7  & 8   & 9   & 10  \\
        \toprule
        1  & 1 & 2 & 3 & 4 & 5  & 6  & 7  & 8   & 9   & 10  \\
        2  & 1 & 2 & 4 & 7 & 11 & 16 & 22 & 29  & 37  & 46  \\
        3  & 1 & 2 & 4 & 8 & 15 & 26 & 42 & 64  & 93  & 130 \\
        4  & 1 & 2 & 4 & 8 & 16 & 31 & 57 & 99  & 163 & 256 \\
        5  & 1 & 2 & 4 & 8 & 16 & 32 & 63 & 120 & 219 & 382 \\
        6  & 1 & 2 & 4 & 8 & 16 & 32 & 64 & 127 & 247 & 466 \\
        7  & 1 & 2 & 4 & 8 & 16 & 32 & 64 & 128 & 255 & 502 \\
        8  & 1 & 2 & 4 & 8 & 16 & 32 & 64 & 128 & 256 & 511 \\
        9  & 1 & 2 & 4 & 8 & 16 & 32 & 64 & 128 & 256 & 512 \\
        10 & 1 & 2 & 4 & 8 & 16 & 32 & 64 & 128 & 256 & 512 \\
        \bottomrule
    \end{tabular}
    \caption{Maximum number of boolean equations in oracle circuit,
    $\F{\ell}{m}$, for $\ell$ being the recursive level and $m$ being the
    number of ancilla qubits.}
    \label{tab:maximum-allowed-equations}
\end{table}

Besides the capacitance of the oracle given a fixed number of ancilla
qubits, another concern for the oracle construction is the circuit depth.
We now calculate the total depth in terms of the number of
function-controlled gates. The circuit depth is perceived as a measure of
the time complexity for the quantum algorithm.

\begin{theorem} \label{thm:Udepth}
    Let $\K{\ell}{m}$ denote the number of function-controlled gates in
    $\U{\ell}{m}$. For various scenarios of $\ell$ and $m$, we have
    \begin{equation*}
        \K{\ell}{m} =
        \begin{cases}
            1,  & m = 1 \\
            2 \cdot 3^{m-2},  & \ell \geq m - 1, m \geq 2 \\
            \sum_{j=1}^\ell \binom{m-2}{j-1} 2^{j-1}
            + \sum_{j=0}^\ell \binom{m-2}{j} 2^{j}, & \mathrm{Otherwise}
        \end{cases}.
    \end{equation*}
\end{theorem}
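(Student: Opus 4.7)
The proof will parallel that of Theorem~\ref{thm:Ucapacity}, splitting the analysis into the regime $\ell\ge m-1$ (where $\K{\ell}{m}$ has saturated) and the regime $1\le \ell\le m-2$. First I would read off from the recursive construction in Section~\ref{sec:recursive-oracle-construction} and Algorithm~\ref{alg:Ucircuit} the basic recurrence
\begin{equation*}
    \K{\ell}{m} = \begin{cases} 1, & m=1, \\ 2(m-1), & \ell=1,\ m\ge 2, \\ \K{m}{m}, & \ell > m, \\ 2\sum_{j=1}^{m-1} \K{\ell-1}{j}, & \text{otherwise}. \end{cases}
\end{equation*}
Here only function-controlled gates are counted, so the single \op{MCX} added at every recursive level contributes nothing. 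The $\ell=1$ boundary case records that $\U{1}{m}$ consists of two sweeps of $m-1$ function-controlled gates, and the last line mirrors the two sweeps of sub-blocks $\U{\ell-1}{m-1},\ldots,\U{\ell-1}{1}$ in Algorithm~\ref{alg:Ucircuit}.

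For the saturated regime, I would prove $\K{\ell}{m}=2\cdot 3^{m-2}$ by induction on $m$, with $m=2$ as the trivial base. The key observation is that when $\ell\ge m-1$, every sub-block $\U{\ell-1}{j}$ with $2\le j\le m-1$ also satisfies its own saturation condition $\ell-1\ge j-1$, so the induction hypothesis gives $\K{\ell-1}{j}=2\cdot 3^{j-2}$. Adding the $j=1$ contribution $\K{\ell-1}{1}=1$, the recurrence collapses to a single geometric sum whose value is $2\cdot 3^{m-2}$.

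For the unsaturated regime I would mimic the telescoping trick from the proof of Theorem~\ref{thm:Ucapacity}: subtracting two adjacent instances of the main recurrence produces the Pascal-like auxiliary relation
\begin{equation*}
    \K{\ell}{m} = \K{\ell}{m-1} + 2\,\K{\ell-1}{m-1}.
\end{equation*}
I would then verify by induction on $\ell$ that the proposed closed form satisfies this relation. The inductive step reduces to applying Pascal's identity $\binom{m-2}{k}=\binom{m-3}{k}+\binom{m-3}{k-1}$ termwise to each of the two sums and reindexing. The base case $\ell=1$ collapses to $2(m-1)$, matching $\K{1}{m}$, and continuity at the interface $\ell=m-1$ with the saturated formula reduces to the binomial identity $\sum_{j=0}^{m-2}\binom{m-2}{j}2^{j}=3^{m-2}$.

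The main obstacle I expect is the algebraic bookkeeping in the Pascal-identity check. The two sums in the closed form range over $j=1,\ldots,\ell$ and $j=0,\ldots,\ell$ respectively, so after Pascal splitting the boundary terms $j=0$ and $j=\ell$ must be reabsorbed by hand before the sums recombine into the claimed form; a minor secondary point is confirming that $\binom{m-2}{j}=0$ for $j>m-2$ automatically truncates the closed form in the borderline case $\ell=m-2$ so that it matches $2\cdot 3^{m-2}-2^{m-2}$, which I would sanity-check against $\K{2}{4}=14$ and $\K{3}{4}=18$ computed from the basic recurrence.
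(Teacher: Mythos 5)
Your proof is correct, and it follows essentially the approach the paper itself intends: the paper states Theorem~\ref{thm:Udepth} without proof, but your argument is the direct analogue of the paper's proof of Theorem~\ref{thm:Ucapacity} (set up the recurrence from the circuit construction, telescope it into the Pascal-type relation $\K{\ell}{m} = \K{\ell}{m-1} + 2\K{\ell-1}{m-1}$, and verify the closed form via Pascal's identity and the boundary checks). The only cosmetic slip is calling the final verification an ``induction on $\ell$'' when the auxiliary relation decrements $m$, so the induction should run on $m$ (or on $\ell+m$) with the $\ell=1$ row and the saturated interface $\ell = m-1$ as boundaries; your numerical checks $\K{2}{4}=14$ and $\K{3}{4}=18$ and the identity $\sum_{j=0}^{m-2}\binom{m-2}{j}2^j = 3^{m-2}$ confirm everything lines up.
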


\begin{corollary} \label{cor:oracledepth}
    Let $\G{\ell}{m}$ denote the number of function-controlled gates in a
    level $\ell$ oracle circuit on $m$ ancilla qubits. For various
    scenarios of $\ell$ and $m$, we have
    \begin{equation*}
        \G{\ell}{m} =
        \begin{cases}
            2 \cdot 3^{m-1},  & \ell \geq m - 1 \\
            \sum_{j=1}^\ell \binom{m-1}{j-1} 2^{j-1}
            + \sum_{j=0}^\ell \binom{m-1}{j} 2^{j}, & \mathrm{Otherwise}
        \end{cases}.
    \end{equation*}
\end{corollary}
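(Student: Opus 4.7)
The plan is to mirror the strategy used in proving Corollary~\ref{cor:oraclecapacity}: identify the level $\ell$ oracle circuit on $m$ ancilla qubits with the recursive block $\U{\ell}{m+1}$ from the standpoint of function-controlled gate counts, and then invoke Theorem~\ref{thm:Udepth} under the index shift $m \mapsto m+1$.

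First, I would compare Algorithm~\ref{alg:oraclecircuit} with the construction of $\U{\ell}{m+1}$ in Algorithm~\ref{alg:Ucircuit}. Both circuits share the same skeleton---$\U{\ell-1}{m}, \U{\ell-1}{m-1}, \ldots, \U{\ell-1}{1}$, then a single central multi-controlled gate, then $\U{\ell-1}{1}, \ldots, \U{\ell-1}{m}$---differing only in that the oracle places an \op{MCZ} on all $m$ ancillae, while $\U{\ell}{m+1}$ places an \op{MCX} controlled from qubits $1$ through $m$ onto the $(m+1)$-st qubit. Since neither \op{MCZ} nor \op{MCX} is a function-controlled gate, the two circuits contain exactly the same number of function-controlled gates, yielding
\begin{equation*}
    \G{\ell}{m} = \K{\ell}{m+1} = 2 \sum_{j=1}^m \K{\ell-1}{j}.
\end{equation*}

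Next, I would substitute $m \mapsto m+1$ in the three branches of Theorem~\ref{thm:Udepth} and read off the resulting formulas. The closed form $\K{\ell}{m+1} = 2 \cdot 3^{m-1}$ yields the first branch of the corollary, while the ``Otherwise'' sum $\sum_{j=1}^\ell \binom{m-2}{j-1} 2^{j-1} + \sum_{j=0}^\ell \binom{m-2}{j} 2^{j}$ of the theorem becomes, after the shift, $\sum_{j=1}^\ell \binom{m-1}{j-1} 2^{j-1} + \sum_{j=0}^\ell \binom{m-1}{j} 2^{j}$, producing the second branch.

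The reduction $\G{\ell}{m} = \K{\ell}{m+1}$ renders the argument essentially immediate, so no genuine obstacle arises. The only care required is the bookkeeping to translate the boundary condition of Theorem~\ref{thm:Udepth} correctly through the index shift, and to verify that the degenerate endpoint $m = 1$ (where the oracle's \op{MCZ} collapses to a single-qubit \op{Z} flanked by two copies of $\U{\ell-1}{1}$) is consistent with $\K{\ell}{2}$ as predicted by the formula.
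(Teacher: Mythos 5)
Your reduction is the right one, and it is the same route the paper implicitly takes (the paper justifies the analogous capacity corollary by exactly this identification and says nothing further here): the level-$\ell$ oracle on $m$ ancillae and $\U{\ell}{m+1}$ invoke the identical sequence $\U{\ell-1}{m},\dotsc,\U{\ell-1}{1}$, a central multi-controlled gate, then $\U{\ell-1}{1},\dotsc,\U{\ell-1}{m}$, so $\G{\ell}{m}=\K{\ell}{m+1}=2\sum_{j=1}^{m}\K{\ell-1}{j}$, and the ``Otherwise'' sum shifts as you describe.

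However, the boundary bookkeeping that you flag but defer is precisely where the argument fails to deliver the printed statement, so the gap is real. Substituting $m\mapsto m+1$ into the condition $\ell\ge m-1$ of Theorem~\ref{thm:Udepth} yields $\ell\ge m$, not $\ell\ge m-1$. For the capacity corollary this slippage is harmless because the two branches coincide at $\ell=m-1$ (there $\sum_{j=0}^{m-1}\binom{m-1}{j}=2^{m-1}$), but for the gate count they do not: at $\ell=m-1$ the shifted ``Otherwise'' expression evaluates to
\begin{equation*}
\sum_{j=1}^{m-1}\binom{m-1}{j-1}2^{j-1}+\sum_{j=0}^{m-1}\binom{m-1}{j}2^{j}
=\bigl(3^{m-1}-2^{m-1}\bigr)+3^{m-1}
=2\cdot 3^{m-1}-2^{m-1},
\end{equation*}
strictly below $2\cdot3^{m-1}$. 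The deficit comes from the outermost block: $\K{\ell-1}{m}$ saturates to $2\cdot3^{m-2}$ only when $\ell-1\ge m-1$. A concrete check: for $\ell=2$, $m=3$ one has $\G{2}{3}=2\bigl(\K{1}{1}+\K{1}{2}+\K{1}{3}\bigr)=2(1+2+4)=14$ (the fourteen function-controlled gates visible in Figure~\ref{fig:level-2-oracle-example}), whereas the corollary's first branch predicts $2\cdot3^{2}=18$. So your argument, carried out carefully, proves the corollary with the first case reading $\ell\ge m$; it cannot establish the case split as printed, which is false at $\ell=m-1$. You need to either correct the boundary or explain the discrepancy explicitly rather than asserting that the shift ``yields the first branch.''
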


In Theorem~\ref{thm:Ucapacity} and Corollary~\ref{cor:oraclecapacity}, we
give both expressions for the circuit capacities. Fixing the recursive
level $\ell$, both capacities are dominated by $\binom{m-2}{\ell}$ term as
$m$ goes large. Hence, given a BQE system with $R$ equations and recursive
level $\ell$, the number of required ancilla qubits scales as
$\bigO{R^{1/\ell}}$. In practice, we could calculate the capacities and
find the most proper number of ancilla qubits on classical computers
efficiently. Table~\ref{tab:maximum-allowed-equations} calculates the
capacities of oracles for various $\ell$ and $m$ for reference.

The asymptotic scaling of the capacity shows the power of our proposed
recursive oracle construction method. While in the era of noisy
intermediate-scale quantum~(NISQ), we are both limited by the number of
qubits and quantum circuit depth. Increasing $\ell$ increases the capacity
of the oracle. According to Corollary~\ref{cor:oracledepth}, the
circuit depth grows even faster than the capacities with an extra factor
$2^\ell$. Hence, given a BQE system, the proper $\ell$ and $m$ should be
carefully chosen based on the property and available resources of the
quantum computer.

\subsection{Oracle Compression}
\label{sec:oracle-compression}

The recursive oracle circuit above can be optimized to reduce the depth.
Recall that the function-controlled gates in the oracle circuit represent
product terms in the boolean equation. From the representation of the
boolean equation, we know that these gates should be commutable. From the
quantum gate perspective, these function-controlled gates consist of
(multi-)controlled \op{NOT} gates, controlling from some of the first $n$ qubits to an ancilla qubit
(including the \op{NOT} gates, which is the special case of controlling from none of the qubits).
All such controlled \op{NOT} gates are commutable. After commuting these
gates and other commutable gates, many gates could be eliminated without
affecting the circuit outcomes.

\begin{algorithm}
    \caption{Greedy oracle compression.}
    \label{alg:rearrange}
    \begin{algorithmic}[1]
        \Require a list of gates $\calG = [g_1, \dotsc, g_k]$.
        \Ensure a list of rearranged gates $\calR$.
        \State {$\calG = \mathrm{sort}(\calG)$}
        \For {each $g_i \in \calG$}
            \If {$g_i = g_{i+1}^\dagger$}
                \State {Eliminate $g_i$ and $g_{i+1}$ from $\calG$}
            \EndIf
        \EndFor
        \State {$\calR = []$}
        \While {$\calG \neq \emptyset$}
            \State {$\calQ = \emptyset$}
            \For {each $g_i \in \calG$}
                \If {qubits of $g_i$ are not in $\calQ$}
                    \State {Add qubits of $g_i$ to $\calQ$}
                    \State {Append $g_i$ to $\calR$}
                    \State {Eliminate $g_i$ from $\calG$}
                \EndIf
            \EndFor
        \EndWhile
    \end{algorithmic}
\end{algorithm}

An exact optimization of the recursive oracle circuit is a job--shop
problem with MPT, or $J;m|p_{ij} = 1;\mathrm{fix}_{j}|C_{\max}$
\footnote{The notation $J;m|p_{ij} = 1;\mathrm{fix}_{j}|C_{\max}$
describes a job--shop problem with $m$ dedicated machines. Each
multiprocessor task (job) requires a unit processing time simultaneously on
all the machines it specifies. The optimization task is to minimize the
makespan, i.e., the completion time of all tasks.} following the widely
adopted notation~\cite{SchedulingNotation}. Brucker et
al.~\cite{ShopScheduling} proves the strong NP-hardness of
$J;2|p_{ij}=1;\mathrm{fix}_j|C_{\max}$ which is a special case of $2$
qubits in this context.  This highlights the complexity of the exact
optimization which we consider infeasible in practice.  Instead of exact
optimization, we use a greedy algorithm to rearrange and reduce the
controlled \op{NOT} gates. There are two steps in our greedy algorithm:
1) eliminate gate pairs that are complex conjugates of each other; 2)
rearrange gates such that they can be maximumly parallelized. In the first
step, all commutable gates are sorted according to the qubit indices being acted on.
Then for all these gates, if they could cancel with
any other one, they should be neighboring to each other after sorting. Hence we go
through the sorted gate list, check whether neighboring gates are complex
conjugates of each other, and eliminate them. In the second step, we
search for maximumly parallelizable gates. We start a gate list with a
single gate. Then gradually add gates to the list that act on qubits
different from all other gates in the list. If no more gates can be added
to the list, we start a new gate list and repeat the process until all
gates are added to one of the gate lists. If the number of product terms
in each boolean equation is much more than the number of qubits, depth can
be reduced by a factor of $\frac{1}{m}$ for $m$ being the number of
ancillae. Figure~\ref{fig:rearrange-example} gives an example of a quantum
circuit before and after applying our greedy algorithm.

\begin{figure}[htb]
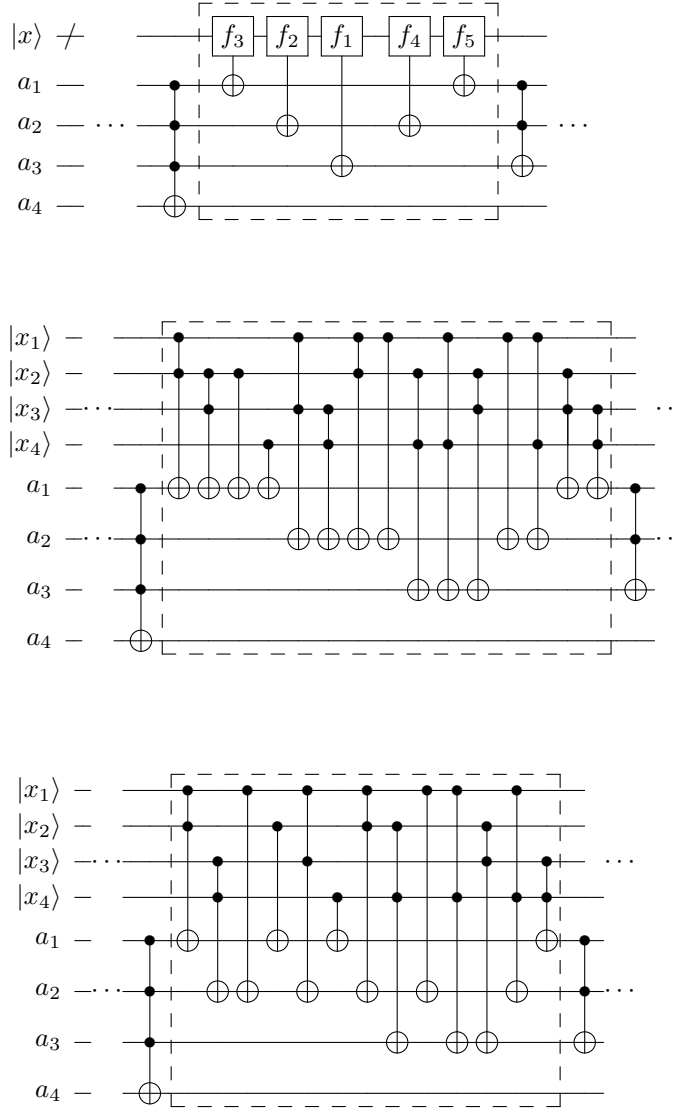

    \centerline{
        \xy
            \qcRearrangeOrigin
        \POS-(0,40)
            \qcRearrangeExpanded
        \POS-(0,60)
            \qcRearrangeAfter
        \endxy
    }
    \caption{A fragment of oracle for level $\ell = 2$ with $m = 4$
    ancilla qubits. The dashed boxes from top to bottom are
    function-controlled-gates, their expansion into \op{CNOT} and \op{MCX}
    gates, and gates after applying Algorithm~\ref{alg:rearrange}. The
    original circuit has a depth of $12$, whereas the optimized circuit has
    a depth of $8$.}
    \label{fig:rearrange-example}
\end{figure}

\section{Randomized Grover's Algorithm}
\label{sec:Algorithm}

In this section, we propose the random version of Grover's Algorithm with
the idea of randomly splitting boolean equations into groups. Then we
analyze the number of iterations required for different schemes of
splitting.


With an appropriate way of constructing the oracle, the vanilla Grover's
algorithm is complete after $K$ iterations. Recall $K$ for the highest
probability of obtaining a correct answer in one measurement is $K =
\mathrm{round} \left( \frac{\arccos{\sqrt{M/N}}}{\theta} \right)$.
However, $K$ could be very large in the case when $M \ll N$. As an
example, for $M = 1$ and $N = 2^{20}, 2^{25}, $ and $2^{30}$, $K$ are $568$,
$3217$, and $12198$, respectively. Combined with the complexity of each
Grover iteration, e.g., the circuit depth of the oracle, the problem
quickly becomes impractical.

\paragraph{Randomized Grover's algorithm.}
We propose a randomized Grover's algorithm that has different Grover
operators $G$ in each iteration. As discussed in
Section~\ref{sec:Construction}, the number of ancilla qubits limits the
number of equations it can solve. Splitting boolean equations into groups
and using different groups of equations in different Grover iteration
allows us to solve a boolean equation system with more equations in
given limited quantum resources.

Assume we have $R$ equations in total. Let all boolean equations be split
into $s$ groups, where groups are denoted as $\calR_1, \calR_2, \dotsc,
\calR_s$. Each group contains only part of $R$ equations and the union of
$\{\calR_i\}_{i=1}^s$ is all $R$ equations, i.e., $\bigcup_{i=1}^s \calR_i
= \{1, 2, \dotsc, R\}$. The number of equations in $\calR_s$ is denoted as
$R_s$. Note that groups can overlap with each other.

The randomized algorithm framework allows each Grover iteration to
choose one of $\calR_i$. Denote $s_i$ as the group index for the oracle in
the $i$-th iteration, i.e.,
\begin{equation}
    O_i: \quad \qubit{x} \mapsto
    \begin{cases}
        -\qubit{x}, & x \in \{ y \mid f_t(y) = 0, \forall t \in \calR_{s_i} \}\\
        \qubit{x},  & \text{otherwise}
    \end{cases}.
\end{equation}
Instead of iterating the same oracle, oracles are now different and are
labeled $O_1, \dotsc, O_K$ for a total of $K$ iterations. The
corresponding operators are labeled $G_1, \dotsc, G_K$, respectively. The circuit
before the measurement is
\begin{equation}
    G_K G_{K-1} \cdots G_1 H^{\otimes n} \qubit{0}
    = W O_K W O_{K-1} \cdots W O_1 \qubit{\psi},
\end{equation}
where $W$ is the diffusion operation.
Figure~\ref{fig:Randomized-Grover-framework} illustrates the procedure
with measurement.

\begin{figure}[htb]
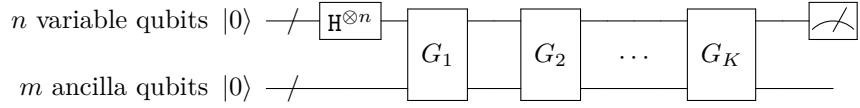

    \centerline{\qcGroverFrameworkRandomized}
    \caption{Randomized Grover's algorithm. Different operators $G_i$
    contain different oracles.}
    \label{fig:Randomized-Grover-framework}
\end{figure}

Many splitting strategies can be applied to the randomized framework. A
fixed splitting strategy is to split these equations in the first place,
$\cup_{i=1}^s \calR_i = \{1, \dotsc, R\}$. Then we iterate using $\calR_1,
\calR_2, \dots, \calR_s$ in a cyclic way until we reach the desired number
of iterations. Another randomized splitting strategy is to first estimate
the number of equations in each iteration, e.g., $q$ equations. Then, at
each iteration, we randomly chose $q$ distinct equations from the entire
boolean equation system. Oracle circuits are then constructed just for the
current iteration. It is worthwhile to point out that unless intended,
each equation should have an equal probability of being chosen in each
iteration. Otherwise, the algorithm will be biased to some equations.

The randomized Grover's algorithm is a natural random extension of the
vanilla Grover's algorithm. The major reason behind proposing the
randomized Grover's algorithm is to reduce the complexity of oracle
circuits, i.e., to reduce the depth and the number of ancilla qubits in the
oracle quantum circuits. However, the randomized idea does not always
work for all boolean equation systems with arbitrary grouping. Here we
give a tiny counterexample illustrating the failure of the randomized
Grover's algorithm.

Consider a 2-variable boolean equation system with two equations,
\begin{equation} \label{eq:simple-two-eq-bad-example}
    \begin{split}
        f_1(x_1, x_2) = & x_1 \oplus 1 = 0, \quad \text{and}\\
        f_2(x_1, x_2) = & x_2 \oplus 1 = 0.
    \end{split}
\end{equation}
Assume the randomized Grover's algorithm splits two boolean equations into
two groups and iterates between two groups in a cyclic way. More
precisely, the algorithm construct oracles with $f_1(x_1,x_2) = 0$ in
$G_1, G_3, \dotsc$, and $f_2(x_1,x_2) = 0$ in $G_2, G_4, \dotsc$. The
state vector before each iteration obeys
\begin{equation*}
    \frac12
    \begin{pmatrix}
        1\\
        1\\
        1\\
        1
    \end{pmatrix}
    \xrightarrow[]{G_1}
    \frac12
    \begin{pmatrix}
        -1\\
        -1\\
        1\\
        1
    \end{pmatrix}
    \xrightarrow[]{G_2}
    \frac12
    \begin{pmatrix}
        1\\
        -1\\
        -1\\
        1
    \end{pmatrix}
    \xrightarrow[]{G_3}
    \frac12
    \begin{pmatrix}
        -1\\
        1\\
        -1\\
        1
    \end{pmatrix}
    \xrightarrow[]{G_4}
    \frac12
    \begin{pmatrix}
        1\\
        1\\
        1\\
        1
    \end{pmatrix}
    \xrightarrow[]{G_5}
    \frac12
    \begin{pmatrix}
        -1\\
        -1\\
        1\\
        1
    \end{pmatrix}
    \xrightarrow[]{G_6}
    \cdots.
\end{equation*}
As shown in the above iterations, the state vector iterates cyclicly among
four vectors, and the amplitude of the solution entry does not change at
all. A quantum measurement after any number of iterations will give one of
all possible states with equal probability. Hence the randomized Grover's
algorithm fails in this case.

For the boolean equation system as in \eqref{eq:original-problem}, we
empirically find that we should not split equations into too many groups
with few equations. For a group with few equations, the solution set would
be much larger than the solution set of the boolean equation system. Such
a splitting would make randomized Grover's algorithm challenging to
succeed. A rigorous mathematical analysis for the success of our
randomized Grover's algorithm would be interesting for future work.

\paragraph{Estimating iteration numbers.}
Both Grover's algorithm and our randomized Grover's algorithm cannot
iterate forever and stop until they meet some convergence criteria.
The randomized Grover's algorithm,
similar to the vanilla Grover's algorithm, adopts a fixed number of
iterations, whose expression differs from \eqref{eq:GroverIterNum}.
The iteration number can be estimated directly by running the Grover operator
and computing the rotation angle of a single iteration, see \cite{QuantumCounting2023}.
In this part, we discuss the estimation of the iteration number for
randomized Grover's algorithm leveraging the expectation of randomness, which reduces the computational cost to the evolvement of a $2\times2$ matrix.

In quantum computing, almost all algorithms bear with randomness. The
measurement results are not deterministic. Desired solutions appear in the
measurement results with a relatively higher probability. Hence, quantum
algorithms, including Grover's algorithm and randomized Grover's
algorithm, have to be executed and measured many times until the desired
solutions appear once we can validate the solution efficiently. In
the case the solution can not be validated efficiently, we have to
repeat the execution and measurements many times and confirm the solution
through some form of majority voting. In solving nonlinear boolean
equations, we are able to validate the solution efficiently in polynomial
time complexity. Randomized Grover's algorithm (also Grover's algorithm)
further has a trade-off between the iteration number $K$ and the
measurement number $J$. The overall solving time for randomized Grover's
algorithm in addressing the boolean equation
system~\eqref{eq:original-problem} is considered as $J\cdot K$.
Hence we propose the following constraint optimization problem
to obtain the best iteration number $K$ and measurement number $J$,
\begin{equation} \label{eq:minimize-cost}
    \min_{P(J,K) > 1 - \varepsilon} J \cdot K,
\end{equation}
where $P(J, K)$ denotes the success probability of the boolean function
solver and $\varepsilon$ is a small failure probability of the algorithm.

In general, $P(J, K)$ increases monotonically with respect to both $J$ and
$K$ for $K$ smaller than the near-optimal number similar to
\eqref{eq:GroverIterNum}. However, an explicit expression for $P(J, K)$ is
unknown for randomized Grover's algorithm. Therefore, solving
\eqref{eq:minimize-cost} analytically and exactly becomes infeasible in
practice. In the following, we will first propose a simplified
probabilistic model for applying randomized Grover's algorithm to solve
boolean equation system, and then numerically estimate $J$ and $K$ as an
approximated solution of \eqref{eq:minimize-cost}.

Recall the $n$-variable boolean equation
system~\eqref{eq:original-problem} has $R$ boolean equations. Let the
number of solutions be $M$. Without loss of generality, we assume
solutions are the first $M$ states, i.e., $\qubit{0}, \qubit{1}, \dots,
\qubit{M-1}$. In randomized Grover's algorithm, each group of equations is
randomly selected from all $R$ equations, whose solutions contain that of
the original boolean equation system. Each oracle $O_i$ could be
represented by a $N \times N$ matrix,
\begin{equation} \label{eq:diagO}
    O_i = \mathrm{diag}
    \big(
        \underbrace{-1, \dotsc, -1}_{M},
        \underbrace{V_1^{(i)}, \dotsc, V_{N-M}^{(i)}}_{N-M}
    \big)
\end{equation}
where $N = 2^n$, and $V_j^{(i)}$ are $\pm 1$ depending on the selected
boolean equations in $O_i$. After $K$ iterations, the quantum state
vector of the randomized Grover's algorithm is
\begin{equation*}
    G_K \cdots G_1 \qubit{\psi} = W O_K \cdots W O_1\qubit{\psi},
\end{equation*}
and the probability of obtaining the $i$-th correct solution is
\begin{equation} \label{eq:pi}
    p_i = \left| \bra{i} G_K \cdots G_1 \qubit{\psi} \right|^2
\end{equation}
for $i = 0, 1, \dots, M-1$. Once $p_i$s are known, the success probability
of the boolean function solver $P(J, K)$ can be written down explicitly in
terms of $p_i$s, i.e.,
\begin{equation} \label{eq:minimize-constraint-pjk}
    P(J,K) = 1-(1-p)^J,\quad \text{where } p = \sum_{i=0}^{M-1} p_i.
\end{equation}
Note that this requires in the worst case, one has to examine all the $J$
outcomes of the process to find if any of the outcomes represents a correct
solution.  Checking whether an $x$ satisfies all $R$ equations is of
polynomial complexity. By \eqref{eq:minimize-constraint-pjk}, the
constraint of optimization \eqref{eq:minimize-cost} is $p > 1 -
\varepsilon^{1/J}$.

Now we make two assumptions to simplify the calculation of $p_i$s and
hence $P(J, K)$. First, we assume that all equation groups have exactly
$\widetilde{M}$ solutions for $\widetilde{M} > M$. Second, we assume that
$V_j^{(i)}$s are identically independently distributed random variables
for all $i$ and $j$. Those assumptions mean that the probability of every
$V_j^{(i)}$ admits
\begin{equation*}
    \Pr(V = 1) = \frac{N - \widetilde{M}}{N - M},
    \quad \Pr(V = -1) = \frac{\widetilde{M} - M}{N - M}.
\end{equation*}
With these two assumptions, we could calculate the expected value of
$G_i$, in the matrix format,
\begin{equation} \label{eq:expectedGi}
\begin{aligned}
    \bbE (G_i) = W \bbE (O_i) &= \frac{2}{N}
    \begin{pmatrix}
        1 \\
        \vdots \\
        1
    \end{pmatrix}
    \begin{pmatrix}
        -1 & \dotsb & -1 & \frac{N+M-2\widetilde{M}}{N-M} & \dotsb & \frac{N+M-2\widetilde{M}}{N-M}
    \end{pmatrix}\\
    & - \mathrm{diag}
    \left(
        -1, \dotsc, -1, \frac{N+M-2\widetilde{M}}{N-M}, \dotsc, \frac{N+M-2\widetilde{M}}{N-M}
    \right),
    \quad \mathrm{for}\; i = 1,\dotsc,K,
\end{aligned}
\end{equation}
and the expected quantum state vector is
\begin{equation} \label{eq:independentGi}
    \bbE(G_K \cdots G_1 \qubit{\psi})
    = (W \bbE(O_1))^K \qubit{\psi}.
\end{equation}
Notice that the right-hand side of \eqref{eq:expectedGi} is independent of
$i$ and $\bbE(W O_i) \qubit{x}$ can be represented only by two different
numbers. Thus, we could equivalently use a 2-dimensional matrix-vector multiplication
scheme to calculate the expected quantum state vector i.e.,
\begin{equation*}
    \qubit{\psi_2^{(i)}} = (G_{2\times 2})^i \qubit{\psi_2^{(0)}},
\end{equation*}
where $\qubit{\psi_2^{(i)}}$ denotes the $2$-dimensional expected quantum
state vector in $i$-th iteration, $G_{2\times 2} = W_{2\times 2}O_{2\times
2}$ denotes the $2$-dimensional Grover operator and
\begin{equation*}
    \qubit{\psi_2^{(0)}} = \frac{1}{\sqrt{N}}
    \begin{pmatrix}
        1 \\
        1
    \end{pmatrix}, \quad
    O_{2\times 2} = \begin{pmatrix}
        -1 & \\ & \frac{N+M-2\widetilde{M}}{N-M}
    \end{pmatrix}, \quad
    W_{2\times 2} = \frac{2}{N}
    \begin{pmatrix}
        1 \\
        1
    \end{pmatrix}
    \begin{array}{cc}
        \bigl(M, & N - M\bigr)\\
    \end{array}
    - I_{2\times 2}.
\end{equation*}
Therefore, the computed probability of obtaining each correct solution is
the square of the first element in $\qubit{\psi_2^{(K)}}$,
$p_i=\left|\braket{0|\psi_2^{(K)}}\right|^2$, for all $i=0,1,\dots, M-1$.
The probability of obtaining any correct solution is $p=M\cdot p_i = M
\cdot \left|\braket{0|\psi_2^{(K)}}\right|^2$. The optimization problem
\eqref{eq:minimize-cost} is thus
\begin{equation} \label{eq:minimize-cost-evaluated}
    \min_{M\cdot \left|\braket{0|\psi_2^{(K)}} \right|^2 > 1 -
    \varepsilon^{1/J}} J\cdot K.
\end{equation}

Given \eqref{eq:minimize-cost-evaluated}, we could conduct a brute force
search in the space of $J$ and $K$ for small-to-medium size boolean
equation systems and find the near-optimal $J$ and $K$, which is already
useful in NISQ.

Through the above analysis, the assumption that $\widetilde{M}$ remains
constant for all oracles is too strong to be true in practice. Further,
the number of solutions in oracles is also not known a priori. If an
efficient estimator or quantum algorithm in estimating $\widetilde{M}$ is
available, we could update our analysis above to be oracle dependent.
Otherwise, we have to adopt a rough estimation of $\widetilde{M}$ in
practice. For all numerical experiments, we estimate $\widetilde{M} = M
\cdot 2^{n-r}$, where $r$ is the number of equations used in each
iteration. From our testing, the success of the randomized Grover's
algorithm is related to the estimation of $\widetilde{M}$ but is not very
sensitive.

\section{Numerical Results} \label{sec:Numerical}

We focus on solving boolean equation systems with only boolean quadratic
equations (BQE) in this section. Each BQE is randomly generated in
advance, by uniformly selecting from all possible quadratic product terms
and linear terms. The number of total terms in BQE is set to a Poisson
random variable with the mean $\frac14 n(n+1)$ which equals half of all
available quadratic product terms. The solutions are calculated by brute
force for testing purposes. For some tests, we fixed the number of
solutions to a specific number or a range. If the number of solutions of
the randomly generated BQE system does not match the requirement, we add a
new BQE or drop an existing one until the requirement is satisfied. We
implement the randomized Grover's algorithm in Section~\ref{sec:Algorithm}
and run the algorithm in the state vector simulator using the state vector
simulation. Limited to the available resource, the maximum number of
variables we can simulate is set to $n=25$ in this section. In all
numerical results, programs are written using \texttt{Qiskit
0.39.3}~\cite{Qiskit} with \texttt{Python 3.8} and executed on a Linux
server with dual Intel Xeon Gold 6226R (2.90GHZ, $2\times16$ Cores,
$2\times32$ Threads) and 1 TB memory. Our code is available
at \url{https://github.com/j7168908jx/QuantumBooleanFunctionSolver}.

\subsection{Efficient Oracle Construction}

\paragraph{Capacitance and Recursive Level.} We explore the trade-off
between the recursion level $\ell$ and the number of ancilla qubits $m$ in
the oracle construction. A higher recursion level will reduce the number
of ancillae used in the circuit at the cost of a deeper circuit. As in the
previous sections, the number of gates will roughly double when we
increase the recursion level $\ell$ by one.

We run the oracle circuit generation with recursion levels set to $\ell =
1, 2, 3, 4$ on various randomly generated single-solution equations with
the number of variables between 10 and 25. The number of ancillae $m$ used
in the circuit is computed by counting the minimum ancillae required to
contain half of the total number of equations, i.e., $\F{\ell}{m} \geq
R/2$. For each pair of a variable number and an $\ell$, we adopt 15 randomly
generated BQEs to get rid of the randomness in the equation system
generation. The $95\%$ confidence intervals in our results are calculated
based on the 15 sample cases. To further explore the power of our
recursive oracle construction, we also include the number of ancillae when
the boolean equation systems have more than 25 variables. In these cases,
the underlying quantum circuits are not explicitly constructed.

\begin{figure}[htb]
    \centering
    \includegraphics[width=0.65\textwidth]{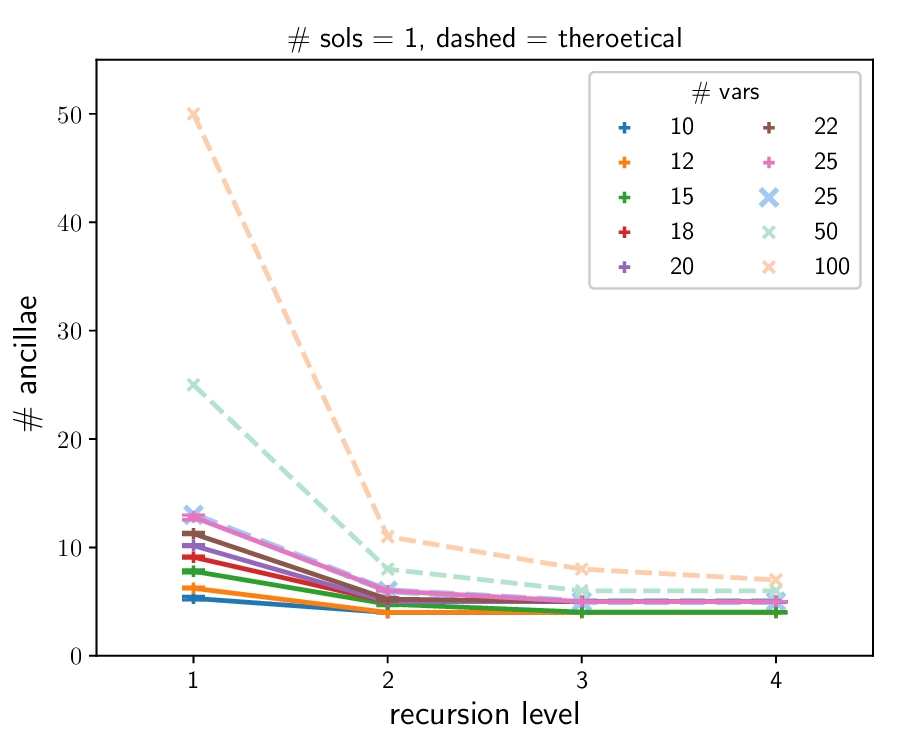}
    \caption{Relation between the recursive level and the required numbers
    of ancillae. The solid lines with `$+$' marks are experiments on
    randomly generated single-solution BQEs. The $95\%$ confidence
    intervals are plotted. The dashed lines with `$\times$' marks are
    theoretical results. Note that for $n=25$, both experimental and
    theoretical results are plotted and they overlap with each other.}
    \label{fig:relation-depth-and-level}
\end{figure}

Figure~\ref{fig:relation-depth-and-level} plots the required number of
ancillae for the oracles with different recursion levels. Solid lines
correspond to cases with actual quantum circuit constructions in practice
whereas dash lines correspond to our theoretical estimations. The solid
lines are plotted with confidence intervals. The variance of ancillae
usage comes from the variance of the number of equations, and is arguably small,
indicating that the scaling will not change much
for different BQEs. For $n = 25$, we plot both the experimental and
theoretical results. From the figure, these two results agree with each
other. As illustrated in Figure~\ref{fig:relation-depth-and-level},
increasing the recursion level will significantly reduce the number of
required ancillae, especially when the number of variables is large. For
BQEs with a small-to-medium number of variables, increasing the recursion
level from 1 to 2 drastically decreases the number of required ancillae.
While further increasing $\ell$ to 3 or above has little impact on
decreasing the number of required ancillae. Hence, for BQEs with a
small-to-medium number of variables, we suggest using $\ell = 2$, whereas
for BQEs with a large number of variables, we suggest using the
theoretical result to find a proper $\ell$ such that the required number
of ancillae is feasible on quantum computers.

\paragraph{Oracle compression.} We investigate the circuit depth before
and after our greedy oracle compression, Algorithm~\ref{alg:rearrange}. We
test 15 single-solution BQEs for each parameter setup, where the BQEs are generated as
described at the beginning of this section.

\begin{table}[htb]
    \centering
    \begin{tabular}{ccccccc}
        \toprule
        \multirow{2}{*}{$n$} & \multirow{2}{*}{$m$} & \multirow{2}{*}{Level
        $\ell$}
        & \multirow{2}{*}{\# Eq. per iter} & \multicolumn{3}{c}{Depth of
        one iteration} \\
        \cmidrule{5-7}
        &  &  &  & w/o compression & w/ compression & ratio \\
        \toprule
        15 & 7 & 1 & 7 & 839 & 149 & -82.24\% \\
        15 & 4 & 2 & 7 & 1459 & 802 & -44.98\% \\
        15 & 5 & 2 & 7 & 2373 & 1062 & -55.22\% \\
        20 & 4 & 2 & 7 & 2580 & 1382 & -46.44\% \\
        20 & 5 & 2 & 11 & 4198 & 1818 & -56.69\% \\
        20 & 5 & 3 & 15 & 8880 & 5333 & -39.94\% \\
        25 & 5 & 2 & 11 & 6520 & 2788 & -57.25\% \\
        25 & 6 & 2 & 16 & 9931 & 3527 & -64.49\% \\
        25 & 5 & 3 & 15 & 13904& 8258 & -40.61\% \\
        \bottomrule
    \end{tabular}
    \caption{Depth of a Grover iteration with and without greedy oracle
    compression. Various sizes of BQEs are tested and results are listed
    for comparison. For each circuit, the depth of one iteration is the
    average circuit depth from the whole circuit.
    The depth of a circuit is counted based on the predefined simple gates in qiskit.
    The displayed figures in the table are further averaged among 15 distinct BQEs (15 distinct
    circuits).}
    \label{tab:effect-of-rearrange}
\end{table}

Table~\ref{tab:effect-of-rearrange} presents the averaged circuit depth
with and without our greedy oracle compression. First of all, for all
cases in Table~\ref{tab:effect-of-rearrange}, our greedy oracle
compression reduces the circuit depth by at least $\sim 40\%$. In most
cases, the circuit depths are reduced by half. From the 6th and last row
of Table~\ref{tab:effect-of-rearrange}, where the recursion level is
relatively high $\ell = 3$, we find that their compression ratios are low.
As the circuit contains more and more \op{MCX} gates on the same set of
qubits and ancilla qubits, the number of quantum gates that could be commuted and
parallelized are reduced. Hence the compression ratio is low in these
cases. Therefore, empirically, we observe that our greedy oracle
compression technique is more powerful when the recursion level $\ell$ is
relatively small. For BQEs with a small-to-medium number of variables, we
again suggest using $\ell = 2$.

\subsection{Randomized Grover's Algorithm}

\paragraph{Total circuit depth.}
Splitting equations into various number of groups in randomized Grover's
algorithm will affect both the total depth of the circuit and the number
of ancilla qubits required. In this part, we will explore the trade-off
between the number of groups and the total depth.


In this experiment, we compare the depth with different splitting
strategies for BQEs with $15$ and $20$ variables. We randomly generate a
sequence of BQEs with the solution number smaller than 90 for both $n =
15, 20$. We group these BQEs into five based on their solution numbers,
namely, single-solution, 2-4 solutions, 5-9 solutions, 10-19 solutions,
and 20-89 solutions. Each group has 40-100 BQEs. For each BQE and a split
factor, the quantum circuit is constructed with recursion level $\ell = 2$
and Grover iteration number achieving a nominal success rate of $99.9\%$
in 1024 shots. The split factor is defined as the total number of
equations over the number of equations chosen per Grover iteration.
A higher split factor means each iteration contains fewer
equations. The nominal success rate is the one estimated by our computing
model in Section~\ref{sec:Algorithm}, which is different from the actual
success rate.


\begin{figure}[htb]
    \centering
    \begin{subfigure}[b]{0.48\linewidth}
      \centering
      \includegraphics[width=\linewidth]{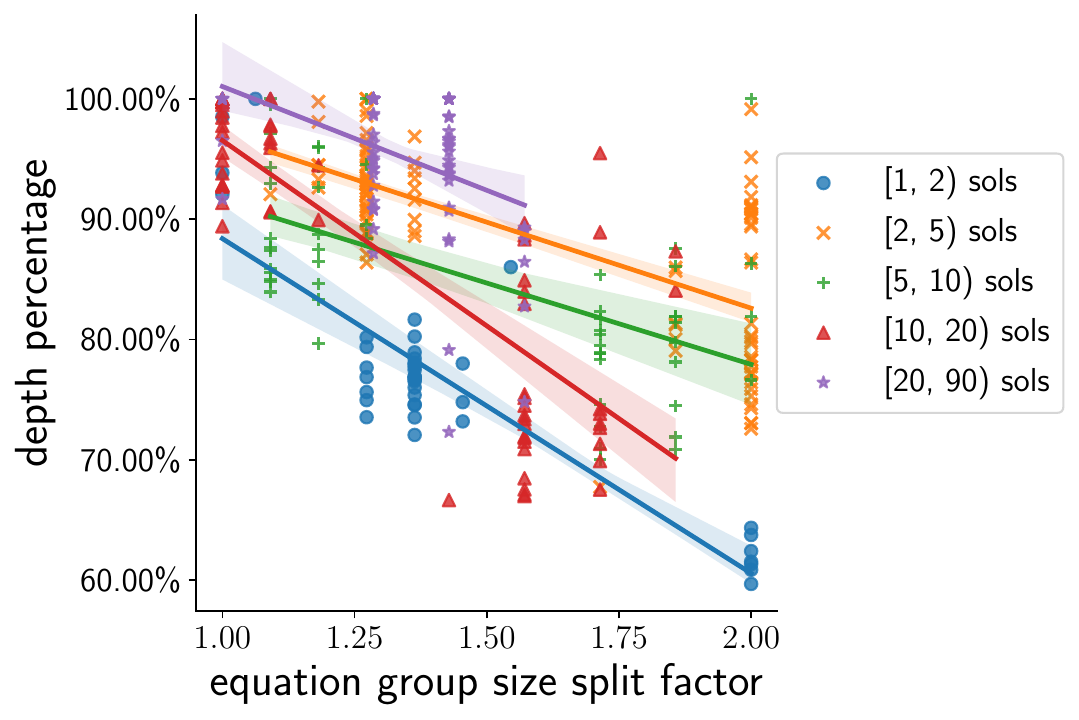}
    \end{subfigure}
    ~
    \begin{subfigure}[b]{0.48\linewidth}
      \centering
      \includegraphics[width=\linewidth]{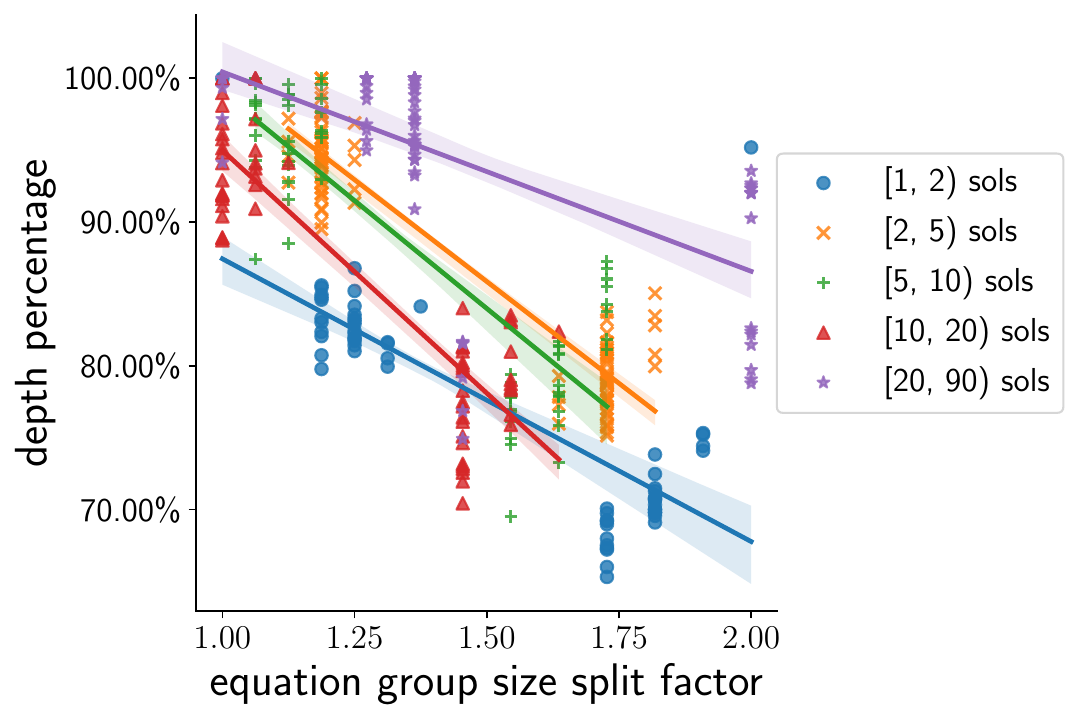}
    \end{subfigure}
    \caption{Relation between splitting and the total depth. The left figure plots 15 variables BQEs and the right plots 20 variable BQEs. For both figures, the $x$-axis is the split factor and the $y$-axis shows the relative depth in each
    group, so the longest circuit in each group is normalized as $100\%$,
    and others are the relative depth to the longest one. The regression
    line for each group is plotted with the $95\%$ confidence interval. }
    \label{fig:relation-split-and-iteration}
\end{figure}

Figure~\ref{fig:relation-split-and-iteration} illustrates the relative
depth of the circuit required to obtain a $99.9\%$ success rate in 1024
shots. The relative depth is defined as the depth of the circuit over the
maximum depth within the group. Each BQE with a split factor contributes
to a point in Figure~\ref{fig:relation-split-and-iteration}. For a better
understanding of the data, we conduct a linear regression for each BQE
group and plot the regression lines together with their $95\%$ confidence
interval. The regression line of the 15-variable 20-to-90-solution group
is shorter since the number of equations in each iteration is already
small.

For both figures in Figure~\ref{fig:relation-split-and-iteration}, all
slopes of the regression lines are negative. Hence, the total depth of the
quantum circuit is reduced as the proportion of equations used in each
iteration decreases. Here, we only explore the split factor between $1.0$
and $2.0$. For a small-to-medium BQE, a large split factor is not helpful.
In general, the circuit depth will first decrease and then increase as the
split factor increases. When the split factor is relatively close to 1,
the splitting strategy effectively reduces the total circuit depth and our
computing model in Section~\ref{sec:Algorithm} is trustable. However, when
the split factor is relatively large, each split group has too few
equations and yields a vast solution space. In this scenario, the number
of Grover iterations must be large and the total circuit depth is also
large. Empirically, we find that a split factor around 2 is efficient for
BQEs with a small-to-medium number of variables.

\paragraph{Success rate.}

On some very small-scale problems, the failure rate of randomized Grover's algorithm
will be abnormally high, due to the inaccurate estimate of $\widetilde M$.
For these small-scale problems, the original solution space is large enough to converge in a
very small amount of iterations, making saving resources less important.
The variance in estimating the iterations dominates the actual value,
leading to a very inaccurate outcome.
We aggregate 15 randomly generated circuits from 10 to 18 variables and summarize the
observed success rate for different iteration numbers in Figure~\ref{figure:failure}, where each experiment sets the nominal success rate
to $80.0\%$.
Firstly, the success rate for most groups has reached the nominal success rate, especially for all groups with 256 shots. The average success probability is $92.4\%$, which is higher than the nominal success rate.
Secondly, due to the limit of accurately estimating the number of solutions in each iteration of Grover's algorithm, we observed that the success rate is lower than the nominal success rate for some cases. Given the number of shots, this discrepancy becomes more prominent as the number of variables increases from $10$ to $18$.
This is because, for a fixed number of shots, a higher number of Grover iterations is required with sparser solutions, which increases the likelihood of overshooting or undershooting the desired result. It is better to increase the number of shots, rather than the number of iterations, to obtain a higher success rate.
Finally, note that some circuits cannot reach a nominal success rate of $80\%$ if the number of shots is set to less than $16$. This is caused by the success rate of a single shot cannot be arbitrary large, and the fail rate of a single shot to the power of the number of shots will not fall below $20\%$.
In this case, one has to increase the number of shots to obtain a higher success rate.

\begin{figure}[htb]
    \centering
    \includegraphics[width=0.9\textwidth,trim={0 0 0.5cm 0},clip]{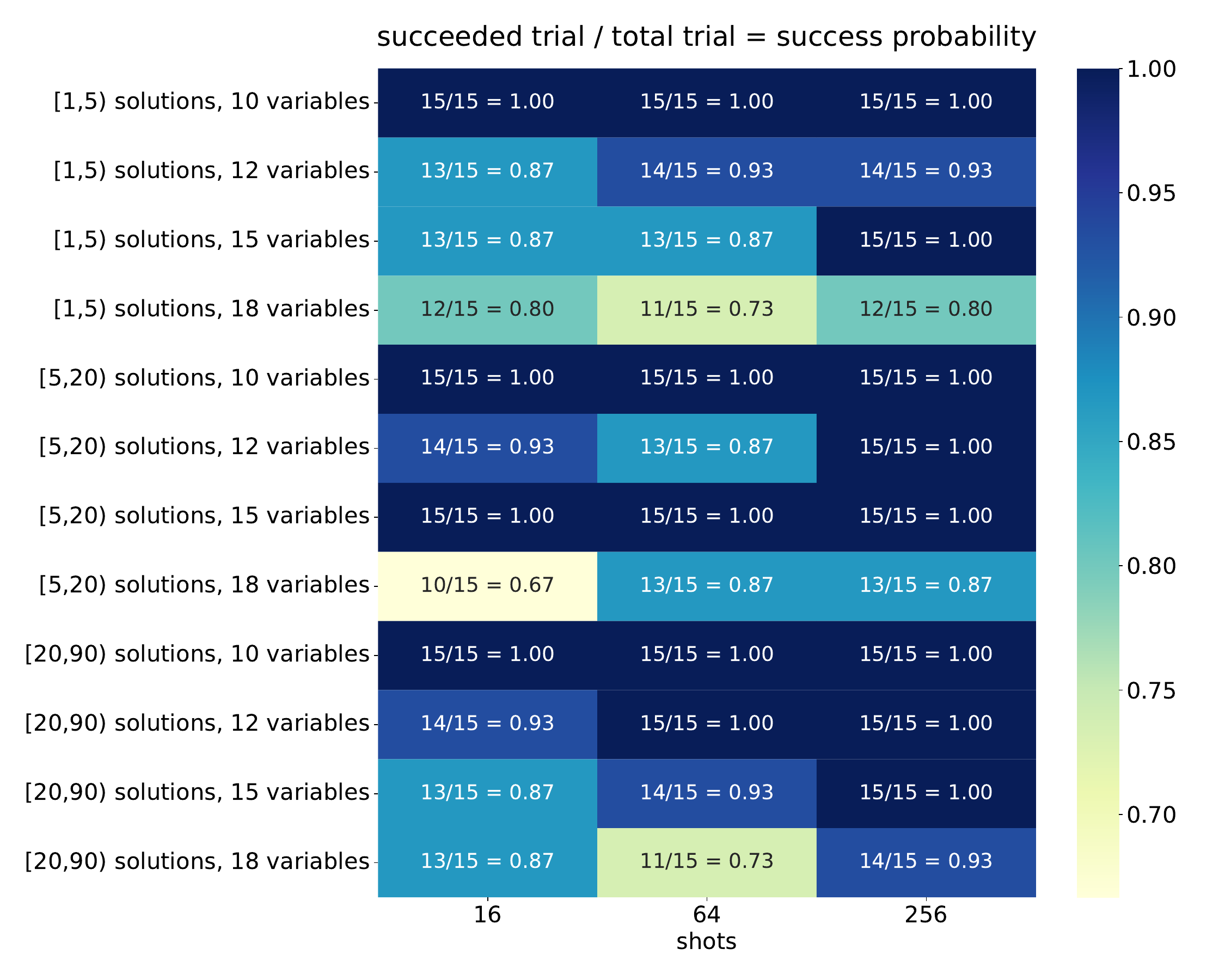}
    \caption{The success rate of each circuit running several numbers of shots.
    The $x$-axis shows the number of shots we ran for each circuit. The $y$-axis contains the number of variables and the number of solutions for each circuit.
    In each cell of the heatmap, we annotate the number of succeeded circuits, the number of total circuits we ran, and the success rate.  The color bar plots the value of the success rate.
    Due to the constraint of computational resources, we can only include test cases with no more than 18 variables.}
    \label{figure:failure}
\end{figure}

\section{Conclusion and Future Work} \label{sec:Conclusion}

We proposed three novel techniques to efficiently solve nonlinear boolean
equations on quantum computers under Grover's algorithm framework. Three
techniques are W-cycle oracle construction, oracle compression, and
randomized Grover's algorithm. For the first technique, W-cycle oracle
construction improves the capacity in encoding the boolean equations into
the quantum circuit given a fixed number of ancilla qubits. W-cycle oracle
construction introduces a recursive circuit construction idea to maximally
reuse ancilla qubits to keep solution information. Given $m$ ancilla
qubits, vanilla quantum circuit construction could encode $m$ boolean
equations, whereas our W-cycle oracle construction could at most encode
$2^m$ boolean equations at the cost of deeper circuits. The W-cycle oracle
construction also introduces a way to conduct the trade-off between the
number of ancilla qubits and the circuit depth. The flexible trade-off is
very important in the NISQ era. The second technique, oracle compression,
adopts a greedy strategy to swap commutable quantum gates. Oracle
compression eliminates redundant quantum gate pairs and rearranges the
quantum gates in a parallelizable way. Numerical experiments show that the
oracle compression technique leads the quantum circuit depth saving by a
factor between $40\%$ to $80\%$. The saving rate, in general, is larger
when the recursive level $\ell$ is small in the W-cycle oracle
construction. The third technique, randomized Grover's algorithm, uses
random combinations of boolean equations in each iteration to construct
the oracle and reduces the oracle circuit depth. In each iteration, the
algorithm chooses parts of the boolean equation system. The solution set
of the chosen equations contains that of the original system. Hence,
through the randomized Grover iteration, amplitudes of the original
solutions are always amplified. While the amplitudes of those virtual
solutions (solutions of chosen equations but not the original system) are
amplified in some iterations and damped in other iterations. However, the
algorithm is not guaranteed to converge. To improve the convergence, we
empirically make the number of equations per iteration relatively large.
An estimation of the number of iterations for randomized Grover's
algorithm is also proposed. The estimation analysis is carried out under
the assumption that the number of solutions in each iteration is a constant.
Numerically, we find that our randomized Grover's algorithm is efficient
and the estimation of iteration number is useful.

There are a few interesting future directions. The most interesting one
would be extending the randomized Grover's algorithm to a wider range of
applications. The idea behind randomized Grover's algorithm is to randomly
relax the constraints on the solutions so that the quantum circuit for the
oracle could be significantly simplified. Such an idea would be useful not
only in NISQ but also in beyond--NISQ era. The rigorous analysis for
randomized Grover's algorithm would be another interesting future
direction. A careful and rigorous analysis would hint at the choices for
selecting boolean equations in each iteration and also lead to a more
accurate estimation of the iteration number. Other interesting future
directions include extending the W-cycle circuit construction idea to
other applications and exploring other optimizing techniques like multiplying
some equations into one.

\section{Acknowledgments} \label{sec:Acknowledgments}
We thank
National Natural Science Foundation of China (NSFC),
National Key R\&D Program of China,
Science and Technology Commission of Shanghai Municipality (STCSM) and
Shanghai Institute for Mathematics and Interdisciplinary Sciences (SIMIS)
for their financial support.
This research was funded by
NSFC under grant 12271109 and 71991471,
National Key R\&D Program of China under grant 2020YFA0711902,
STCSM under grant 22TQ017 and 24DP2600100, and
SIMIS under grant number SIMIS-ID-2024-(CN).
The authors are grateful for the resources and facilities provided by
NSFC, National Key R\&D Program of China, STCSM and SIMIS,
which were essential for the completion of this work.
\bibliographystyle{alphaurl}
\bibliography{main.bib}

\end{document}